\newcommand\blfootnote[1]{%
  \begingroup
  \renewcommand\thefootnote{}\footnote{#1}%
  \addtocounter{footnote}{-1}%
  \endgroup}
\title{Quantum general covariance}
\author{Christian Gaß$^{1}$, José M. Gracia-Bondía$^{2,3}$,
and Karl-Henning Rehren$^{4}\quad$%
\blfootnote{Email: christian.gass@fuw.edu.pl, jmgb@unizar.es,
krehren@uni-goettingen.de}
\blfootnote{ORCID: CG: \href{https://orcid.org/0000-0003-4059-6817}{https://orcid.org/0000-0003-4059-6817}, JGB: \href{https://orcid.org/0000-0002-8036-4589}{https://orcid.org/0000-0002-8036-4589},
  KHR:
  \href{https://orcid.org/0000-0003-3640-6515}{https://orcid.org/0000-0003-3640-6515}}
\\[6pt] {\footnotesize $^1$Department of Mathematical Methods in
Physics, Faculty of Physics,} 
\\
{\footnotesize University of Warsaw, Pasteura 5, 02-093 Warszawa,
Poland.} 
\\
{\footnotesize $^2$ Laboratorio de Física Teórica y Computacional,
Universidad de Costa Rica, San Pedro 11501, Costa Rica.}
\\
{\footnotesize $^3$ CAPA and Departamento de Física Teórica,
Universidad de Zaragoza, 50009 Zaragoza, Spain.} 
\\
{\footnotesize $^4$Institut für Theoretische Physik, 
Georg-August-Universität Göttingen, 37077 Göttingen, Germany.}
}
\date{\today}
\DeclareMathOperator{\supp}{supp}   
\newcommand{\al}{\alpha}            
\newcommand{\be}{\beta}             
\newcommand{\dl}{\delta}            
\newcommand{\eps}{\varepsilon}      
\newcommand{\Ga}{\Gamma}            
\newcommand{\ka}{\kappa}            
\newcommand{\la}{\lambda}           
\newcommand{\sg}{\sigma}            
\newcommand{\Th}{\Theta}            
\newcommand{\lra}{\leftrightarrow}  
\newcommand{\lrd}{\overset{\leftrightarrow}{\partial}}
\newcommand{\ld}{\overset{\leftarrow}{\partial}}
\newcommand{\pa}{\partial}
\newcommand{\bN}{\mathbb{N}}        
\newcommand{\bR}{{\mathbb{R}}}      
\newcommand{\bS}{\mathbb{S}}        
\newcommand{\sO}{\mathcal{O}}       
\newcommand{\go}{\mathfrak{g}}      
\newcommand{\del}{\partial}         
\newcommand{\dc}{\delta_c}          
\newcommand{\doublcont}[1]{(\!( #1 )\!)} 
\newcommand{\modd}{\stackrel{\textup{div}}{=}} 
\newcommand{\wt}{\widetilde}        
\def\bb#1{\doublcont{#1}}
\newcommand{\aside}[1]{} 
\newcommand{\vev}[1]{\langle\!\langle#1\rangle\!\rangle} 
\newcommand{\Tren}{T_{\textup{ren}}} 
\newcommand{\word}[1]{\quad\text{#1}\quad} 
\def\wick:#1:{\,\mathopen:#1\mathclose:\,} 
\def\ol{\overline}
\def\epsi^#1_#2{\eps^{#1}{}_{\!#2}} 
\def\epsii_#1^#2{\eps_{#1}{}^{\!#2}} 
\def\lLa^#1_#2{\Lambda^{#1}{}_{#2}}  
\def\duo<#1,#2>{\langle#1,#2\rangle} 
\def\scal<#1|#2>{\langle#1\mathbin|#2\rangle} 
\theoremstyle{plain}
\newtheorem{theorem}{Theorem}
\numberwithin{theorem}{section}
\newtheorem{corollary}[theorem]{Corollary}
\newtheorem{definition}[theorem]{Definition}
\newtheorem{lemma}[theorem]{Lemma}
\newtheorem{notation}[theorem]{Notation}
\newtheorem{proposition}[theorem]{Proposition}
\newtheorem{remark}[theorem]{Remark}
\newcommand{\sref}[1]{Sect.~\ref{#1}}
\newcommand{\aref}[1]{App.~\ref{#1}}
\def\sfrac#1#2{\hbox{\large{$\frac{#1}{#2}$}}}
\renewcommand{\section}{\@startsection{section}{1}{\z@}%
                       {-3.5ex \@plus -1ex \@minus -.2ex}%
                       {2.3ex \@plus.2ex}%
                       {\normalfont\large\bfseries}}
\renewcommand{\subsection}{\@startsection{subsection}{2}{\z@}%
                       {-3.25ex \@plus -1ex \@minus -.2ex}%
                       {1.5ex \@plus .2ex}%
                       {\normalfont\normalsize\bfseries}}
\numberwithin{equation}{section}
\begin{document}

\maketitle

\medskip

\begin{abstract}
The structure of quantum interactions with fields of helicity two
(``gravitons'') is strongly constrained by three principles:
positivity (Hilbert space), covariance, and locality of observables. To fulfil them
simultaneously, some (non-observable) fields need to be non-local. We work with string-localized fields. The
results then follow from the condition that entities closely related
to observables, like the $\bS$-matrix, be local and
string-independent. They in particular reproduce the interactions
dictated by general covariance in classical field theory.
Graviton-matter couplings are consistent only when the graviton
self-interaction is taken into account as well.
\end{abstract}

\medskip

\begin{flushright}
\textit{Science is not a collection of truths; it is a continuous 
exploration of mysteries}

\medskip

-- Freeman Dyson
\end{flushright}

\medskip

\textit{Keywords}: 
graviton couplings, string-localized fields, Epstein-Glaser
renormalization, perturbative quantum field theory

\medskip

\baselineskip10pt
\tableofcontents
\baselineskip14.7pt

\medskip

\section{Introduction}
\label{sec:introibo}

\subsection{The background}
\label{sec:backofthemind}

Physics being above all an experimental science, the development of
quantum gravity (QG), as compared to other branches of theoretical
physics, has been hampered by a relative poverty of phenomenological
development \cite{PPNPpaper} and the widespread belief that QG
becomes dominant only at Planck scales. On the other hand, there seems
to be in the literature a large, if not total, consensus that the
(Einstein equations and the) Hilbert and/or the Einstein’s actions for
general relativity can be re-derived by means of quantum field theory.
The idea of coupling a massless helicity $|h|=2$ particle
(``graviton'') to its own stress-energy-momentum tensor, and the
program of iterating the process, go back to Kraichnan and Gupta in
the 1950s. They were elaborated by Feynman \cite{Feynman62, FMW} and
Deser \cite{Deser:1969, Boulware:1974}, with coworkers. A more
sophisticated later version is \cite{vanderBij:1981} -- leading to
Einstein's unimodular action. Similar techniques were put to work in
\cite{Nemausa}, in spite of prior pointed and detailed criticism of
the whole idea by Padmanabhan~\cite{Padma:2004}.

To the same ends, the book \cite{ScharfLast} by Günter Scharf employs
in the framework of ``causal gauge invariance'' the inductive
development by Epstein and Glaser of Bogoliubov’s recursive
$\bS$-matrix theory – that is, the functional analogue of the
Dyson series \cite{BogoliubovS80, EpsteinGlaser73,DuetschP}. It was based on
previous partial work by Scharf and Wellmann~\cite{ScharfWell}, and on
more advanced one of the same kind by Schorn \cite{Schorn}. In
\cite{ScharfLast} the first two non-trivial terms in the Newton's
constant expansion of the Einstein--Hilbert action for general
relativity are recovered; standard lore~\cite{Wyss} has it that this
is sufficient to determine all higher-order terms, and so to the
recover the whole theory.

Now, in the respect of the still elusive search for QG, there remains
a chasm, since whereas \textit{classical} linear or non-linear systems
can always be set on a Hilbert space by use of Koopman operators
\cite{Koopman31, GayBalmazT22}, that is structurally impossible in the
setting of gauge theories. The purpose of the present article is to
revisit the problem from the viewpoint and with the more rigorous
tools of string-localized quantum field theory (SQFT), which manages
as well to incorporate the Bogoliubov–Epstein–Glaser method under the
guise of the \textit{string-independence} (SI) condition. This turns
out to be a fruitful strategy, leading to the concept of ``quantum
general covariance'', exploited in this paper, dispelling the
perceived inconsistency between general relativity and quantum field
theory. The main aim of SQFT is to avoid the difficulties associated
with ``canonical quantization'' of gauge theories, notably the use of
state spaces with indefinite metrics, by working with interaction
densities which are operators on a physical Hilbert space throughout
-- so that observables are elements of an operator
algebra with a \textit{positive} vacuum functional. This is possible with the help of \textit{string-localized}
fields -- here above all the quantum counterpart $h_{\mu\nu}(x;c)$ of
the metric deviation field or ``graviton'' -- arising as integrals
over the associated field strengths -- here $F_{[\mu\ka][\nu\la]}(x)$,
essentially the linearized Riemann--Christoffel curvature tensor --
see Eq.~\eqref{eq:RF}. For recent accounts of SQFT, consult
\cite{Phocaea, MundRS23}.

The dependence of the interaction density on the ``string'' is a total
derivative, so that the action (an integral over the density) is
string-independent, and so string-localization does not spoil locality
of a perturbative QFT model, at least in first order. The latter
property of the action appears to be a rather restrictive one;
however, it is satisfied by all interactions among particles of the
Standard Model. Here we intend to show that it can be satisfied also
both for the self-interaction of gravitons and for the interactions
between gravitons and matter.

On the face of it, the total-derivative property is sufficient to
ensure SI of the action only at first order of perturbation theory. At
higher orders, there will occur ``obstructions'' to~SI. These must be
well-localized, so that they have a chance to be canceled by
higher-order interactions -- then we say ``the obstruction is
resolved''. Both the conditions on the interaction at the lower order
ensuring that resolution, and the form of the higher-order
interactions themselves, are therefore predictions of the approach.
They do not refer to an underlying classical theory, nor do they
assume gauge or diffeomorphism invariance, but arise as consistency
conditions for local and covariant quantum field models on
Hilbert space.%
\footnote{The need for unobservable fields assigned to infinite
string-like spacelike regions for the {\em mathematical description}
of charged particle states was recognized by Buchholz and Fredenhagen
over forty years ago \cite{BF82}, see our \sref{sec:Conclusion}. In
spite of the apparent non-locality, the principle of locality of
observables is held, and they derive localization properties of
particle states which hold in all standard models of relativistic
quantum theory.}

\subsection{Quantum general covariance}
\label{sec:QGC}

Remarkably, for all the interactions of interest, one finds oneself in
the ``lucky'' situation of well-localized obstructions. In other
words, quantum field theory in the SQFT dispensation has the power to
predict on its own the structure of consistent interactions. In the
presence of interacting massive vector bosons, like in the Standard
Model (SM) that structure includes chirality of the weak interaction
\cite{Rosalind}, the Lie-algebra makeup of cubic self-couplings of
several vector bosons and the quartic self-coupling familiar from
Yang--Mills theory \cite{Borisov}, the need of a scalar field with its
potential terms \cite{MundRS23} -- without ever invoking ``gauge
invariance'' nor classical ``Lagrangians''.

\smallskip

The above is shown here to hold also true for gravity: the unique
self-interactions obtained from the expansion of the classical
diffeomorphism invariant action of general relativity at first and
second order in terms of the Newton constant are coincident with the
self-interactions of our $h^{\mu\nu}(x;c)$. So we reproduce the
uniqueness of graviton self-interactions which is the general
conclusion of the various approaches presented in the beginning. The
main new feature in SQFT is the nature of the argument, totally based
on quantum principles.

\smallskip

We find a second, even more remarkable result, concerning the
couplings to matter. Free matter fields in a flat spacetime have
conserved stress-energy tensors \cite{MundRS17a, MundRS17b}. In
perturbation theory, their time-ordered products fail to be conserved,
i.e., Ward identities are violated. Now, the coupling of matter to a
field of helicity 2 turns out to have an obstruction proportional to
the violation of the Ward identity. This obstruction cannot be
resolved on its own. However, on including the self-interaction of the
graviton field, one finds an interference term yielding another
obstruction. And both obstructions together can be resolved by a
higher-order matter coupling. That the latter coincide with the
second-order term of the expansion of the generally covariant matter
coupling is in the vein of the above: consistency of the quantum
couplings ``predicts'' general covariance.

This outcome (at second order of perturbation theory) reveals a kind
of ``predestined match'' between matter and gravity. On the one hand,
the violation of the matter's Ward identity knows nothing about
gravity. On the other hand, the self-interaction of gravitons knows
nothing about matter. Yet, the two taken together produce a sum of
non-resolvable obstructions that together can be resolved -- like a
lock and a key. Consistency is only achieved when the key and the
lock, apparently ignorant of each other, do meet. We are tempted to
call this mechanism, which does not require classical covariance:
\textbf{``Quantum general covariance''}. The latter holds for matter
consisting of an arbitrary number of scalar, photon and Dirac fields.
We prove these facts here in due course.

\subsection{Strategy of the paper}
\label{sec:qui-tacet}

In order to arrive at our results, the overall strategy schematically
is: (i) Start with a first order input interaction and compute its
obstruction to~SI. (ii) Make sure that the obstruction is resolvable.
This is separately the case for the self-interaction of gravitons, but
not for the coupling to matter. It is again true if both are taken
together. (iii) Determine the respective second-order interactions that
do the job. If necessary, continue to the next order.

\smallskip

We next elaborate a bit on how SQFT enters this strategy in the
subject at hand -- leaving the filling in of the technical details and
proofs for the main body of the paper.

\begin{enumerate}\itemsep0mm

\item 
Consider the field-strength tensor $F_{[\mu\ka][\nu\la]}(x)$ of the
massless particle of helicity 2. It is defined on the Fock space over
the direct sum of unitary representations of the Poincar\'e group
corresponding to helicities $|h|=2$. This Hilbert space therefore
contains only physical graviton states. Its two-point function is
given by
\begin{align}
\label{eq:tpF2a}
\vev{F_{[\mu\ka][\nu\la]}F'_{[\rho\tau][\sg\pi]}} =
\sfrac12&\big[\eta_{\mu\rho}\eta_{\nu\sg}+
\eta_{\mu\sg}\eta_{\nu\rho}-\eta_{\mu\nu}\eta_{\rho\sg}\big]
\pa_\ka\pa_\la\pa'_\tau\pa'_\pi W_0(x-x')
\notag
\\
&~\begin{array}{c} 
- (\mu\leftrightarrow\ka)\,\,
\\
[-1mm]-(\nu\leftrightarrow\la)\,\,
\end{array}
\quad
\begin{array}{c} 
-(\rho\leftrightarrow\tau)\,\,
\\
[-1mm]-(\sg\leftrightarrow\pi);
\end{array}
\end{align}
where $W_0$ denotes the two-point function of a massless scalar field
and $\eta_{\mu\nu}$ is the (mostly negative) Minkowski metric.

\item 
Represent $F_{[\mu\ka][\nu\la]}$ as:
\begin{align*}
F_{[\mu\ka][\nu\la]}(x) = [\pa_\mu\pa_\nu h_{\ka\la}(x;c) -
\pa_\ka\pa_\nu h_{\mu\la}(x;c) - \pa_\mu\pa_\la h_{\ka\nu}(x;c) +
\pa_\ka\pa_\la h_{\mu\nu}(x;c)].
\end{align*}
Here $h_{\mu\nu}(x;c)$ is a traceless symmetric field tensor defined
on the physical Fock space by ``string integrations'' $I_c$ over $F$
-- see Eq.~\eqref{eq:def_SL_pot}. This means that our graviton field
$h_{\mu\nu}(x;c)$ is localized on a spacelike cone emanating from $x$
and extending to infinity, with a directional profile given by a
function $c(e)$ of the spacelike directions~$e$.%
\footnote{As an idealization, one may think of a narrow cone as a
``(half-)string'', i.e., a line extending from $x$ to infinity. A
gauge-theoretic motivation for the introduction of the strings will be
given in \aref{a:BRST}. But that motivation is not ours'. We need the
field $h_{\mu\nu}(x;c)$ in order to formulate interactions on
\textit{Hilbert space}. The key point is that the physics ought not
depend on $c$.}

\item 
The prerequisite for a \textit{string-independent} $\bS$-matrix for
gravity is a string-independent first-order action. That is, we
require an interaction density $L_1(x;c)$ whose string variation
$\dl_cL_1(x;c)$ is a total derivative:
\begin{align}
\label{LQ}
\delta_c \big(L_1\big)  = \pa_\mu Q_1^\mu.
\end{align}
This structure is called an {\em ``$L$-$Q$ pair''} -- see
\sref{sec:LQ} below. In the cases at hand, $L_1$ is a Wick polynomial
in $h(x;c)$ and its derivatives, and the~$Q_1^\mu$ are Wick
polynomials involving as well an auxiliary field~$w$ measuring the
variation of~$h$ with respect to the string. The smallest possible UV
scaling dimension of $L_1$ is {\em five} (reflecting the
non-renormalizability of self-interactions with helicity~2 by power
counting); with this dimension the form of~$L_1$ is essentially
unique, being cubic in $h$ with two derivatives.

\item
Property \eqref{LQ} ensures that the perturbative $\bS$-matrix
$\bS=\mathbf{1}+i\kappa \int d^4x\, L_1(x;c)+\dots$, where
$\ka=4\sqrt{2\pi G}$ for $G$ the Newton constant, be
string-independent in first order. 

But SI in general is not assured to hold in second order, because
$TL_1(x_1;c)L_1(x_2;c)$ in general fails to be a total derivative:
\begin{align}
\label{S11}
\delta_c\bigl(TL_1(x_1;c)L_1(x_2;c)\bigr) =&~T\pa_{x_1^\mu}
Q_1^\mu(x_1;c)L_1(x_2;c)+(x_1\leftrightarrow x_2)
\\
\label{S12} 
\neq&~\pa_{x_1^\mu}T Q_1^\mu(x_1;c)L_1(x_2;c) +(x_1\leftrightarrow x_2).
\end{align}
The \textit{difference} $O^{(2)}(x_1,x_2,c)$ between $\eqref{S11}$ and
$\eqref{S12}$ is the \textit{second-order obstruction} to SI of the
$\bS$-matrix.

\item
The computation of the obstruction at the tree level turns out to be
straightforward in terms of the propagators. The result must be of the
form:
\begin{align}
\label{eq:SI2}
O^{(2)}(x_1,x_2;c) = i\delta_c\big(L_2(x_1;c)\big)\delta(x_1-x_2) +
\hbox{derivatives}
\end{align}
with $L_2$ being a quartic Wick polynomial.

\item
If \eqref{eq:SI2} holds, add $L_2$ to the interaction density:   
\begin{align}
\label{L}
L_{\rm int}(x;c) = \kappa L_1(x;c) + \frac{\kappa^2}2 L_2(x;c) + \cdots
\end{align} 
This ensures that in second order, the contribution from $L_2$ to the
$\bS$-matrix cancels (resolves) the obstruction $O^{(2)}$ up to
derivatives. In other words, the $\bS$-matrix including the {\em
``induced'' interaction} $L_2$ is string-independent in second order.

\item
Proceed recursively: compute the obstruction in third order of the
$\bS$-matrix with interaction $\kappa L_1+\frac12\kappa^2L_2$. It must be
of the form:
\begin{align}
\label{SI3}
O^{(3)}(x_1,x_2,x_3;c)= i^2\delta_c\bigl(L_3(x_1;c)\bigr)
\delta(x_1-x_2)\delta(x_2-x_3) + \hbox{derivatives}.
\end{align}
This determines $L_3$ (up to a derivative), and so on. In practice,
these higher-order corrections are rarely required.
\end{enumerate}

Note that \eqref{eq:SI2} is a rather strong requirement, without which
the construction would be worthless. The subtraction defining
$O^{(2)}$ is convenient because \eqref{S11} is supported everywhere
in~$M\times M$, whereas the difference is a priori supported on the
set $\{x_1\in x_2+\bR\supp c\}$. That Eq.~\eqref{eq:SI2} may be
satisfied at all with \emph{localized} support on the set
$\{x_1=x_2\}$ in all interactions of interest is the hallmark of SQFT.
In order to decide on \eqref{eq:SI2} and determine $L_2$ up to
derivatives, it is sufficient to compute \eqref{S11}, up to
derivatives. We exploit this simplicity in the subsequent analysis.

Our first main result is that the cubic graviton self-interaction
$L_1$ satisfying \eqref{LQ} is unique up to a total divergence, and
that its second-order obstruction can be resolved. Also $L_1$ and
$L_2$ as determined by \eqref{eq:SI2} reproduce the classical
expansion of the Einstein action in powers of the metric deviation
$h_{\mu\nu}$ defined by $\go^{\mu\nu}\equiv \sqrt{-g} g^{\mu\nu}=
\eta^{\mu\nu}+ \kappa h^{\mu\nu}$, where the classical field
$h^{\mu\nu}(x)$ is replaced by the string-localized quantum field
$h^{\mu\nu}(x;c)$.

The second main result is that, given the stress-energy-momentum
tensor $\Theta^{\mu\nu}$ of a matter field, the coupling $L_{1,\rm
mat} = \frac12 h_{\mu\nu}(x;c)\Theta^{\mu\nu}$ to the latter -- which is
the unique possibility from the $L$-$Q$-pair condition discussed in item~3.
above -- has an obstruction that cannot be resolved; namely
\eqref{eq:SI2} fails. But this obstruction can be resolved when one
adds the self-coupling of the graviton field: \textbf{the key opens
the lock}. Moreover, the resulting second-order induced matter
interaction reproduces the classical expansion of the generally
covariant ``matter Lagrangian''.

\subsection{Final introductory comments}
\label{sec:keep-talking}

An extra word is in order concerning renormalizability. The field
strength $F$ has scaling dimension 3, the string-localized field $h$
-- as any other boson potential in SQFT -- has scaling dimension 1. In
spite of the latter's good behaviour, the first-order interaction has
dimension~5, as we shall see. Thus it is not UV power-counting
renormalizable.%
\footnote{That power counting remains a meaningful notion in SQFT was
shown in \cite{G22}.}
Our work in this paper does not need to address loop diagrams.
Nevertheless, one faces an infinite series of higher-order
interactions of increasing dimension already at tree level. This is
not a surprise, given that also classical general covariance implies
an infinite power series. Whether counterterms for UV-divergent loop
diagrams can be absorbed in a renormalization of the unique coupling
constant $\kappa$, or substantially change the structure of the
higher-order interactions, cannot be assessed as of yet.

Our path to perturbative QG in a Minkowski background does not branch
from standard gauge theory in the first place; rather it has to be
regarded as an alternative to it. In the former, the BRST method is a
crucial tool to eliminate unphysical degrees of freedom after the
perturbation development has been done. Such tools we do not need,
since from the beginning we work on a physical Hilbert space with the
correct degrees of freedom. Nevertheless, it is instructive to discuss
the relation between the two approaches. In \aref{a:BRST} we show how
SQFT modifies the BRST approach in such a way that the interaction
density whose BRST variation is a divergence becomes invariant -- and
so BRST invariance becomes obsolete altogether. Quantum general
covariance manifested by the ``lock-key scenario'' discussed above, is
in fact a characteristic feature of the Hilbert space approach;
whereas the non-resolvable obstructions of the matter interaction
(Sect.\ \ref{ssc:h_SET}) is removed in the BRST approach with
the help of ghost fields, blurring the power of quantum principles.

\section{Preliminaries on string-localized fields}
\label{sec:Preliminaries}

Standard lore has that the redundancy of degrees of freedom in gauge
theories can be avoided if one allows the potentials of certain
quantum fields to be non-local -- see for example the discussion in
the textbook~\cite[Ch.~22]{Schwartz14}. The mildest and better
physically justified way of introducing such a non-locality is by
using the \emph{string-localized} potentials originally mooted
in~\cite{MundSY04, MundSY06}. String-localized potentials can be
defined for arbitrary masses $m\geq0$ and spins/helicities $s\ge1$.
Here we mainly deal with the massless potential field corresponding to
helicity~2, baptized the \emph{string-localized graviton potential}.
In the following we list its relevant properties, introducing all the
pertinent concepts and notations.

\subsection{The string-localized graviton potential}
\label{sec:slocpot}

Consider the massless free \textit{field strength} tensor
$F_{[\mu\ka][\nu\la]}(x)$ of helicity 2 -- essentially, as advertised,
the linearized curvature tensor. It is defined in terms of
the creation and annihilation operators for the physical helicities
$h=+2$ and $h=-2$ on the Fock space over the corresponding unitary
Wigner representation of the Poincar\'e group with $m=0$ and $|h|=2$.
By construction it satisfies the wave equation and enjoys the
symmetries and field equations:
\begin{align}
&F_{[\mu\ka][\nu\la]} = -F_{[\ka\mu][\nu\la]},\;\;
F_{[\mu\ka][\nu\la]} = F_{[\nu\la][\mu\ka]},\;\; \eta^{\mu\nu}
F_{[\mu\ka][\nu\la]} = 0,\;\; \del^\mu F_{[\mu\ka][\nu\la]}=0,
\label{eq:properties_F}
\\
&F_{[\mu\ka][\nu\la]} + F_{[\ka\nu][\mu\la]} + F_{[\nu\mu][\ka\la]}
=0, \quad \del_\rho F_{[\mu\ka][\nu\la]} + \del_\mu
F_{[\ka\rho][\nu\la]} + \del_\ka F_{[\rho\mu][\nu\la]} =0,
\notag
\end{align}
its positive-definite two-point function being given already in
\eqref{eq:tpF2a}. In order to introduce interactions one needs a
potential~$h_{\mu\nu}$, from which the field
tensor~$F_{[\mu\ka][\nu\la]}$ arises as the double exterior 
derivative or linearized curvature. Now, the construction of a local
pointlike potential as an operator-valued distribution on the same
Fock space is impossible~\cite{Weinberg64b}. However, there is a {\em
  string-localized} potential satisfying (2.5), which does live on the
same Fock space \cite{MundSY04,MundSY06}. For its definition \ref{def:SL_pot}, we need the
the operation of string integration.

\begin{definition}
\label{def:Ic}
Let $H:=\{ e\in\bR^{1+3}\;|\; e^2<0 \}$ denote the open subset of
spacelike vectors in Minkowski space and let $c\in C_c^\infty(H)$ be a
smooth test function with compact support so that the cone $\{\bR_+
\cdot \supp c\}$ is convex, satisfying the unit weight condition
$\int_H d^4e\,c(e) = 1$. We define the \emph{string integration}
$I_c^\mu$ by its action on a generic (possibly operator-valued)
distribution $f$ on the Minkowski space:
\begin{align}
I_c^\mu f(x) := \int_H d^4e \, c(e) e^\mu \int_0^\infty d s\, f(x+se)
\equiv \int_H d^4e \, c(e) \int_{C_{x,e}} f(y) \,dy^\mu,
\end{align}
where the curve $C_{x,e}$ is the ray (``string'') in direction $e$
extending from $x$ to spatial infinity.
\end{definition} 

It is obvious that
\begin{align}
\label{eq:Ic_inverse_del}
I_c^\mu \del_\mu f(x) = \del_\mu I_c^\mu f(x) = -f(x), \word{or}
I_c^\mu \del_\mu =: (I_c\del) = -1,
\end{align}
whenever derivates and integrals may be interchanged and there are no
boundary terms at infinity.

\begin{definition}
\label{def:SL_pot}
The \emph{string-localized graviton potential} $h_{\mu\nu}(x;c)$ is
defined via
\begin{align}
\label{eq:def_SL_pot}
h_{\mu\nu}(x;c) := I_c^\ka I_c^\la F_{[\mu\ka][\nu\la]}(x). 
\end{align}
\end{definition}
Thus it ``lives'' on the Hilbert-Fock space of the $F$-tensor field
\cite{MundSY04,MundSY06} and possesses the same two degrees of freedom
as $F$ -- see~\cite{MundSY06,MundRS17b} for the general spin/helicity
cases.

The scaling dimension of quantum fields and their propagators dictates
the strength of UV divergences. Because the two-point function for~$F$
in \eqref{eq:tpF2a} scales under $x\to\lambda x$ like $\lambda^{-6}$,
the field $F$ has scaling dimension~3; and because of two
string-integrations in its definition~(2.4), $h_{\mu\nu}(x;c)$ has
scaling dimension~1, the same as the scaling dimension of the
canonically quantized gauge potential $h_{\mu\nu}(x)$, with the
difference that the former is defined on the Hilbert space, while the
latter is not.

The cubic interaction coupling will consequently have scaling
dimension~5, which is beyond the power-counting bound for
renormalizability, as in all other approaches. This is to be expected
since the gravitational coupling constant $\kappa$ has a negative mass
dimension.%
\footnote{Recall however the comment in Sect.~\ref{sec:keep-talking}
on prospects that ``infinitely many UV renormalization constants"
might be all fixed by the condition of string-independence of the
$\bS$-matrix.}

The $\bS$-matrix constructed in terms of the string-localized
potential will be manifestly unitary, because all involved fields
act on a Hilbert space. The issue with the $\bS$-matrix will rather be its
string-independence  -- see Sect.~\ref{sec:LQ}.

It is also clear that $h_{\mu\nu}(c)$ depends on
the choice of $c$, being localized along the string (or
rather cone) $x+\bR_+ \supp c$. However, as already pointed out, its double
exterior derivative gives back the original point-localized
$F$-tensor,
\begin{align}
\label{eq:Fcurlh}
\del_\mu \del_\nu h_{\ka\la}(x;c) - \del_\mu\del_\la h_{\ka\nu}(x;c) -
\del_\ka \del_\nu h_{\mu\la}(x;c) + \del_\ka \del_\la h_{\mu\nu}(x;c) =
F_{[\mu\ka][\nu\la]}(x),
\end{align}
which is independent of $c$. The string-localized potential satisfies
the following relations, as consequences of the properties
\eqref{eq:properties_F} of the $F$-tensor:
\begin{align}
\label{eq:properties_h}
h_{\mu\nu}=h_{\nu\mu}, \quad \eta^{\mu\nu} h_{\mu\nu} = 0,\quad
\del^\mu h_{\mu\nu}=0, \quad \square h_{\mu\nu}=0,\quad I_c^\mu
h_{\mu\nu}=0.
\end{align}

In the sequel we shall omit in the notation the dependence on $c$; it
is sufficient to know that the dependence of $h$ on $c$ is such that
when the profile $c(e)$ is varied, one has
\begin{align}
\label{eq:delta_c_h}
\delta_c \big(h_{\mu\nu}\big) = \del_\mu w_\nu+ \del_\nu w_\mu, \quad
\word{where} w_\mu=(\delta_cI_c^\ka) h_{\ka\mu}
\end{align}
is another string-localized field. The precise form of the
string-integrated differential operator $\delta_c I_c^\ka$ will not be
relevant.%
\footnote{One has to vary the function $c$ by a function of
weight zero. That is, $\delta c(e)= \del_{e^\tau} b^\tau(e)$. This
gives
\begin{align*}
\delta_c I_{c}^\ka f(x) &= \int_H e^\ka d^4e \, \del_{e^\tau}
b^\tau(e) \int_0^\infty ds\; f(x+se) =-\int_H d^4e\, b^\tau(e)
\int_0^\infty ds \,\Big(f(x+se) \delta^\ka_\tau + s \del_\tau f(x+se)
e^\ka \Big) .
\end{align*}
All that matters is that
$\delta_c I_c^\ka \del_\ka=0$ by \eqref{eq:Ic_inverse_del}, hence
\begin{align*}
 \delta_c\big(h_{\mu\nu}\big)=\delta_c(I_c^\ka I_c^\la)F_{[\mu\ka][\nu\la]} 
 = (\delta_cI_c^\ka) \pa_{[\mu} h_{\ka]\nu}+ (\mu\leftrightarrow\nu) =
(\delta_cI_c^\ka) \del_\mu h_{\ka\nu}+ (\mu\leftrightarrow\nu)
=:\del_\mu w_\nu+\del_\nu w_\mu .
\end{align*}
If $\delta c(e)$ is compactly supported, then one may choose $b(e)$ to
be compactly supported, and $w_\mu$ is string-localized in the convex
hull of $\mathbb{R}_+ \cdot \bigl(\supp(b)\cup \supp(c)\bigr)$.}
By \eqref{eq:properties_h}, the field $w_\mu$ is divergenceless:
\begin{align}
\del^\mu w_\mu=0.
\end{align}
That the string variation \eqref{eq:delta_c_h} be a derivative is
essential in order to construct $\bS$-matrices $T\exp\bigl(i\int
d^4x\,L_{\rm int}(x;c)\bigr)$ that do not depend on the profile
function~$c$, from the interactions satisfying \eqref{LQ}.

\subsection{The $L$-$Q$ pair formalism and the
condition of string-independence}
\label{sec:LQ}

In string-localized QFT, the perturbative $\bS$-matrix can be defined
as a formal power series
\begin{align}
\label{eq:sloc_Smatrix}
\bS[g;c] = 1 + \sum_{n=1}^\infty \frac{1}{n!} \int d^4x_1 \dots \int
d^4x_n\, g(x_1) \dots g(x_n)\, S_n(x_1,\dots,x_n;c),
\end{align}
where $g$ is a test function over Minkowski space and the $S_n$ are
time-ordered products of a normal-ordered interaction density,
possibly plus local terms supported on the ``small diagonal'':
$\Delta_n := (x_1=x_2=\dots=x_n)$. Such local terms supported on
$\Delta_n$ either correspond to new Wick-polynomials in the fields or
to corrections to already existent ones. We call the new
Wick-polynomials \emph{induced interactions}, while the corrections to
already existent ones correspond to a renormalization of their
prefactor (that is, they are \emph{local counterterms}).

The expressions $S_n$ depend on the string-smearing function $c$ via
the contained string-localized fields. The first order term $S_1(x;c)$
is related to the $\sO(g)$-part of the interaction density,
\begin{align}
S_1(x;c) \equiv i L_1(x;c).
\end{align}At higher orders, the situation becomes more complicated. In general, we 
will have 
\begin{align}
S_2(x_1,x_2;c) = i^2\,\Tren[L_1(x_1;c) L_1(x_2;c)] + i L_2(x_1;c)
\delta(x_1-x_2),
\end{align}
where the renormalized time-ordered product $\Tren[L_1(x_1;c)
L_1(x_2;c)]$ contains the local counterterms and $L_2(x_1;c)$ is an
induced interaction. Then the third order contribution will be a
time-ordered product
\begin{align}
S_3(x_1,x_2,x_3;c) &=i \Tren[ S_2(x_1,x_2;c) S_1(x_3;c)] +
\textup{symmetric} 
\notag \\
&\quad + i L_3(x_1;c) \delta(x_1-x_2) \delta(x_2-x_3),
\end{align}
and so on. The full (unrenormalized) interaction density of the model
would then be:
\begin{align}
L[g;c](x) := \sum_{k=1}^\infty \frac{g(x)^k}{k!} L_k(x;c),
\end{align}
with normal-ordered induced interactions $L_k$. For the interactions
of the Standard Model, $L_k=0$ for all $k>2$, and so one does not have
to look for induced couplings after third order. For graviton
interactions, it is expected that the series does not terminate.

The dependence of scattering amplitudes on the test function $c$ (or
the existence of a preferred direction, if the support of $c$ is very
narrow) is not observed in scattering experiments. Hence, one requires
that the $\bS$-matrix be independent of the choice of $c$ in the
\emph{adiabatic limit} $g(x)\uparrow\mathrm{const}$, where the test
function $g$ becomes a true coupling constant. That is,
\begin{align}
\label{eq:SI_Smatrix}
\lim_{g\uparrow\mathrm{const}} \delta_c \big(\bS[g;c]\big) \stackrel{!}{=} 0.
\end{align}
Requirement \eqref{eq:SI_Smatrix} is what we call the \emph{SI
condition} for the $\bS$-matrix. Provided that the adiabatic limit can
be performed, it is fulfilled if
\begin{align}
\label{eq:SI_T}
\delta_c \big(S_n(x_1,\dots,x_n;c)\big) = i^n \sum_{k=1}^n
\frac{\del}{\del x_k^\mu}\,Q_{n}^\mu(x_1,\dots,x_n;c), \quad \forall
n\in\bN,
\end{align}
so that after integration by parts, one finds
\begin{align*}
\delta_c \big(\bS\big) &= - \sum_{n=1}^\infty \frac{i^n}{n!}\int
d^4x_1\dots \int d^4x_n\,\sum_{k=1}^n g(x_1) \dots (\del_\mu g)(x_k)
\dots g(x_n) \,Q_{n}^\mu
\stackrel{g\uparrow\mathrm{const}}{\longrightarrow} 0.
\end{align*}

As already indicated in \eqref{LQ}, at first order in perturbation
theory the SI condition implies the $L$-$Q$-pair condition $\delta_c
\big(L_1(x;c)\big) = \del_\mu Q_1^\mu(x;c)$. That is, the string
variation of the interaction density must be a total divergence. In
this paper we only consider tree graph contributions up to second
order of perturbation theory. 
In particular, we will not have to deal with local counterterms, but
only with induced interactions. By Wick's theorem, modulo loop graphs
we have:
\begin{align}
\label{eq:T_2nd_expansion}
\Tren[L_1(x_1;c)L_1(x_2;c)] = \wick:L_1(x_1;c)L_1(x_2;c): +
\wick:\underbracket{L_1(x_1;c)L_1}(x_2;c): +\dots,\qquad 
\\
\word{where} \wick:\underbracket{L_1(x_1;c)L_1}(x_2;c):\, :=
\sum_{\varphi,\chi} \wick: \frac{\del L_1}{\del \varphi}(x_1;c) 
\vev{\Tren \varphi(x_1;c)  \chi(x_2;c) } \frac{\del L_1}{\del \chi}(x_2;c) :
\label{eq:keyformula}
\end{align}
is the sum of all terms with one contraction. The fields
$\varphi,\chi$ may depend on the string-smearing function~$c$, and
$\vev{\Tren \varphi(x_1) \chi(x_2) }$ denotes the renormalized
time-ordered two-point function of $\varphi$ and $\chi$, whose precise
meaning is explained in the next subsection \ref{sec:propagators}.
Under application of~$\delta_c$, the first term on the right-hand side
of \eqref{eq:T_2nd_expansion} is a total divergence by virtue of the
$L$-$Q$-pair condition. Then the SI condition is satisfied at second-order
tree level if
\begin{align}
\delta_c\bigl(\wick:\underbracket{L_1(x_1;c)L_1}(x_2;c):\bigr) =
\sum_{k=1}^2\frac{\del}{\del x_k^\mu}\,Q_{2,\textup{tree}}^\mu
(x_1,x_2;c) + i\delta_c \big(L_{2,\textup{tree}}(x_1)\big)\delta(x_1 -
x_2).
\end{align}
For our present aims, the precise form of $Q_{n}^\mu$,
$Q_{n,\textup{tree}}^\mu$ and other total divergences is not required.

\subsection{Two-point functions and propagators}
\label{sec:propagators}

All two-point functions and propagators (i.e.~time-ordered two-point
functions) can be respectively expressed in terms of the two-point
functions and propagators of scalar fields $\phi$ of mass $m\ge0$:
\begin{align}
\label{eq:tpprop}
\vev{\phi(x)\phi(x')}=\int \frac{d^4k} {(2\pi)^3}\,\delta(k^2-m^2)
\theta(k^0) \,e^{-ik(x-x')}=:&~W_m(x-x'), 
\\
\vev{T\phi(x)\phi(x')}=i \int\frac{d^4k}{(2\pi)^4}
\frac{e^{-ik(x-x')}}{k^2-m^2+i\eps}=:&~D_{m,F}(x-x').
\end{align}
The two-point function for the Maxwell field strength tensor (of
helicity 1) is
\begin{align}
\label{eq:tpF1}
\vev {F_{\mu\nu}(x)F_{\ka\la}(x')} = - \eta_{\mu\ka}\pa_\nu \pa'_\la
W_0(x-x') - (\mu\lra\nu)-(\ka\lra\la).
\end{align}
The two-point function for the field strength tensor of helicity 2 was
given in \eqref{eq:tpF2a}. It is of course consistent with the
relations~\eqref{eq:properties_F} and the wave equation.
Likewise, the ``kinematic'' propagator is obtained by replacing the
massless two-point function $W_0$ by the massless propagator
$D_{0,F}$.

Similarly, the kinematic propagator for $h_{\mu\nu}$ is defined by
applying the string integrations as in \eqref{def:SL_pot}. The result
is:
\begin{align}
\label{eq:T0hh}
\vev{T_0h_{\mu\nu}(x)h_{\rho\sg}(x')} =&~\sfrac12\bigl[E'_{\mu\rho}E'_{\nu\sg}
+E'_{\mu\sg}E'_{\nu\rho}-E_{\mu\nu}E''_{\rho\sg}\bigr] D_{0,F}(x-x'),
\end{align}
where (with $I_c'$ the string integrations w.r.t.\ $x'$):
\begin{align} 
E_{\mu\nu} :=~&\eta_{\mu\nu} + I_{c,\mu}\del_\nu+ I_{c,\nu}\del_\mu+
I^2_c\del_\mu \del_\nu,
\notag
\\
E''_{\rho\sg} :=~&\eta_{\rho\sg} + I'_{c,\rho}\del'_\sg+
I'_{c,\sg}\del'_\rho+ I'^2_c\del'_\rho\del'_\sg,
\notag
\\
E'_{\mu\rho} :=~&\eta_{\mu\rho} + I'_{c,\mu}\del'_\rho+
I_{c,\rho}\del_\mu+ (I_cI_c')\del_\mu \del'_\rho,
\label{eq:E}
\end{align}
cf.~\cite{MundRS17b} in this respect. Kinematic propagators of
derivatives of $h_{\mu\nu}$ are obtained by applying such
derivatives in Eq.~\eqref{eq:T0hh}.

The kinematic propagator \eqref{eq:T0hh} and its derivatives respect
the symmetry of $h_{\mu\nu}$, but because of \begin{align} \square
D_{0,F}(x-x')=-i\delta(x-x')
\end{align}
they do not respect the trace condition in \eqref{eq:properties_h} --
consult \eqref{eq:trace_viol} in this respect. This would seem to
imply that the propagators reflect a non-existent trace degree of
freedom -- an unacceptable feature in a Hilbert space theory.
Fortunately, the linear dependency among field components can be
restored by exploiting the freedom of renormalization of propagators
(see again \aref{a:Propagators}). The matter generally consists in
adding string-integrals of derivatives of $\delta(x-x')$:
\begin{align}
\vev{T_{\rm ren}\phi(x)\chi(x')} &= \vev{T_0\phi(x)\chi(x')} +
\vev{T_r\phi(x)\chi(x')}, 
\notag
\\
\word{where} \quad\vev{T_r\phi(x)\chi(x')} &=
\sum_{\underline\mu,\underline\nu,\underline\la}
iC^{\phi,\chi}_{\underline\mu,\underline\nu,\underline\la}
I_c^{\underline\mu}{I'}_c^{\underline\nu}\,\pa^{\underline\la}\,
\delta(x-x'),
\label{eq:freedom_of_reno}
\end{align}
with numerical coefficients $C$. $\underline\lambda$ etc.\ are Lorentz multi-indices. The lengths $\vert\underline\mu\vert$ and
$\vert\underline\nu\vert$ give the numbers of string-integrations, and
$4+\vert\underline\la\vert- \vert\underline\mu\vert
-\vert\underline\nu\vert$ is the scaling dimension of the
renormalization term. By the standard power counting argument, these
numbers cannot exceed those of $\vev{T_0\phi\chi'}$. Since we do not
want to break scale invariance of a massless theory already at the
tree level, we shall actually demand {\em equality} of the scaling
dimensions.

The coefficients in \eqref{eq:freedom_of_reno} can be adjusted so that
the trace conditions are restored. Further constraints on these
coefficients will arise from the condition that the second-order
obstruction has the form~\eqref{eq:SI2} without string-integrated
$\delta$-functions. For the details, see \aref{a:Propagators}.

The renormalized propagators determine the obstructions to SI.
Propagators for matter fields and their renormalizations are studied
in \sref{ssc:h_SET}.

\section{Graviton couplings on Hilbert space}
\label{sec:gravitons}

We set out now to derive up to second order of perturbation theory in
the framework of SQFT the structure of graviton couplings to the
stress-energy tensors of the \textit{scalar, Maxwell and Dirac}
``matter'' fields, as well as the structure of the graviton
self-coupling.

As already indicated, to describe spin one-half fermions and scalars
the SQFT construction is unnecessary, their stress-energy tensors thus
being point-localized from the outset. In the case of Maxwell fields,
one might envisage coupling the string-localized graviton potential to
a string-localized stress-energy tensor for helicity 1. It turns out
however~\cite{GPhD} that the $L$-$Q$-pair condition \eqref{LQ} can only
hold when the string-localized graviton is coupled to the well-known
\textit{point-localized} stress energy-tensor of the Maxwell field.

\begin{notation}
\label{def:notation} 
To simplify notation in what follows, we drop the colon-notation for
Wick-ordering and write $A\equiv A(x)$ and $A'\equiv A(x')$ for
generic fields $A$ (possibly Wick polynomials), and $\varphi$, $\chi'$
for generic linear fields. It will be also convenient to write
\begin{align*}
(AB)_{\mu\nu} := A_{\mu\ka} B^\ka_\nu, \quad \doublcont{AB} := 
A_{\ka\la} B^{\ka\la}
\end{align*}
for contractions of symmetric tensors $A$ and $B$, and similarly for
just one tensor or products of more than two tensors: i.e.,
$\doublcont{A}$ is the trace of $A$. We will write $A\modd B$ when
expressions $A$ and $B$ differ by a total divergence.
\end{notation}

\subsection{Graviton coupling to stress-energy tensors}
\label{ssc:h_SET}

For the moment, let $\Th^{\mu\nu}_\textup{mat}$ denote the (symmetric
conserved, point-localized) stress-energy tensor associated with a
generic ``matter'' field, and consider the coupling:
\begin{align}
\label{eq:coup_matter}
L_{1,\textup{mat}} := \sfrac12 \doublcont{ h \Th_\textup{mat}}.
\end{align}
In view of \eqref{eq:delta_c_h}, clearly $L_{1,\textup{mat}}$ of above
satisfies the $L$-$Q$-pair condition:
\begin{align}
\delta_c\big(L_{1,\textup{mat}}\big) = \del_\mu Q_{1,\textup{mat}}^\mu
\word{with} Q_{1,\textup{mat}}^\mu = w_\nu\,\Th^{\mu\nu}_\textup{mat}.
\end{align}
From Eq.~\eqref{eq:keyformula}, the renormalized second order tree
graph can be expanded as
\begin{align}
\label{eq:TL1mL1m:a}
\underbracket{L_{1,\textup{mat}}L}\!_{1,\textup{mat}} &=
\sfrac{1}{4} \Th^{\mu\nu}_\textup{mat} \vev{\Tren h_{\mu\nu}
h'_{\rho\sigma}} {\Th'}^{\rho\sigma}_\textup{mat} +\sfrac{1}{4}
h_{\mu\nu} h'_{\rho\sigma} \sum_{\varphi,\chi'} \frac{\partial
\Th^{\mu\nu}_\textup{mat} }{\partial \varphi} \vev{\Tren \varphi \chi'}
\frac{\partial {\Th'}^{\rho\sigma}_\textup{mat}}{\partial \chi'}.
\end{align}
We next have to keep an eye on the renormalization of the graviton
propagators in \aref{a:Propagators}. By inspection of \eqref{eq:T0hh}
and \eqref{eq:E}, we see that all string-dependent parts of the
kinematic propagator contain uncontracted derivatives. Because the
stress-energy tensors are conserved, this means that they only
contribute divergences to the first term in \eqref{eq:TL1mL1m:a}. To
ensure the same for the renormalized propagator $\vev{T_rhh'}$, as
well it must contain only terms with uncontracted derivatives, or
string-independent terms. The latter cannot occur by the bounds
discussed around Eq.~\eqref{eq:freedom_of_reno}. Thus a necessary
condition for the string variation of expression~\eqref{eq:TL1mL1m:a}
to satisfy Eq.~\eqref{eq:SI2} is to impose the form \eqref{eq:p} as a
renormalization condition on $\vev{T_rhh'}$.

In summary, the string variation of the first term in
\eqref{eq:TL1mL1m:a} is a total divergence and we find:
\begin{align*}
\dc\Big(\underbracket{L_{1,\textup{mat}}L}\!_{1,\textup{mat}}\Big)
&\modd - \sfrac{1}{2} h'_{\rho\sigma} w_\nu\,\del_\mu
\sum_{\varphi,\chi'} \frac{\partial \Th^{\mu\nu}_\textup{mat}}{\partial \varphi}
\vev{\Tren \varphi \chi'} \frac{\partial {\Th'}^{\rho\sigma}_\textup{mat}}{\partial
\chi'} +(x\lra x').
\end{align*}
That is to say, obstructions to second-order SI at tree level can only
be caused by a violation of the \textbf{Ward identity} associated to
the (point-localized) matter stress-energy tensors, that is:
\begin{align}
\label{eq:violWard}
\del_\mu \Tren\Th^{\mu\nu}_\textup{mat}
{\Th'}^{\rho\sigma}_\textup{mat}\Big\vert_{\rm tree}=\pa_\mu \sum_{\varphi,\chi'} \frac{\partial
  \Th^{\mu\nu}_\textup{mat} }{\partial \varphi} \vev{\Tren
\varphi \chi'} \frac{\partial
{\Th'}^{\rho\sigma}_\textup{mat}}{\partial\chi'} \neq0.
\end{align}

We compute the obstructions to string-independence at second order
tree level corresponding to the possible violation of the Ward
identities for the following stress-energy tensors:
\begin{align}
\Th^{\mu\nu}_\phi &= \del^\mu \phi \del^\nu \phi - \tfrac{1}{2}
\eta^{\mu\nu} \big( \del_\ka \phi \del^\ka \phi -m^2 \phi^2 \big)
&&\text{of a scalar field }\phi(x);
\label{eq:Theta_scalar} 
\\
\Th^{\mu\nu}_F &= - F^{\mu\ka} {F^\nu}_\ka+\tfrac{1}{4} \eta^{\mu\nu}
F^{\ka\la} F_{\ka\la} &&\text{of the Maxwell tensor}~F_{\mu\nu}(x);
\label{eq:Theta_Maxwell} 
\\
\Th^{\mu\nu}_\psi &= \tfrac{i}{4} \Bigl(\overline{\psi}\gamma^\mu
\overset{\leftrightarrow}{\partial^\nu} \psi + \overline{\psi}
\gamma^\nu \overset{\leftrightarrow}{\partial^\mu}\psi \Bigr)
&&\text{of a Dirac field } \psi(x).
\label{eq:Theta_fermi} 
\end{align}
Notice first that the trace parts of the stress-energy tensors do not
contribute to the tree graph, because $h_{\mu\nu}$ is traceless and
the renormalizations are determined such that
\begin{align*}
\eta^{\mu\nu} \vev{\Tren h_{\mu\nu} h'_{\rho\sigma}} = 0 =
\eta^{\rho\sigma} \vev{\Tren h_{\mu\nu} h'_{\rho\sigma}}.
\end{align*}

We are going to show:
\begin{proposition}
\label{prop:matter_obst}
For the matter couplings $L_{1,\rm mat} = \frac12 h_{\mu\nu}
\Theta^{\mu\nu}_{\rm mat}$ with either of
\eqref{eq:Theta_scalar}--\eqref{eq:Theta_fermi}, there is a
non-resolvable obstruction of the interaction $L_{1,\rm mat}$\,, of
the universal form:
\begin{align}
\label{eq:obst_univ}
\sO_{2,\rm mat}(x,x') = -i\,\Th^{\mu\nu}_{\rm mat} \, w^\ka \big(
\del_\ka h_{\mu\nu} - \del_\mu h_{\ka\nu} - \del_\nu h_{\ka\mu}\big)\,
\delta(x-x').
\end{align}
\end{proposition}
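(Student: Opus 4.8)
The plan is to reduce the computation of the second-order obstruction to the single surviving term in \eqref{eq:TL1mL1m:a} involving the matter propagator $\vev{\Tren\varphi\chi'}$, and then evaluate the Ward-identity violation \eqref{eq:violWard} field by field. First I would take the established starting point
\begin{align*}
\dc\Big(\underbracket{L_{1,\textup{mat}}L}\!_{1,\textup{mat}}\Big)
\modd - \sfrac{1}{2} h'_{\rho\sigma} w_\nu\,\del_\mu
\sum_{\varphi,\chi'} \frac{\partial \Th^{\mu\nu}_\textup{mat}}{\partial \varphi}
\vev{\Tren \varphi \chi'} \frac{\partial {\Th'}^{\rho\sigma}_\textup{mat}}{\partial
\chi'} +(x\lra x'),
\end{align*}
so that everything hinges on the distributional identity for $\del_\mu\Tren\Th^{\mu\nu}_{\rm mat}{\Th'}^{\rho\sigma}_{\rm mat}\big|_{\rm tree}$. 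The key input is that for a free matter field the classical conservation $\del_\mu\Th^{\mu\nu}_{\rm mat}=0$ holds on-shell, so in a time-ordered product the divergence produces only contact terms: $\del_\mu$ hitting a propagator $\vev{\Tren\varphi\chi'}$ inside $\partial\Th^{\mu\nu}/\partial\varphi\cdot\vev{\cdots}\cdot\partial{\Th'}^{\rho\sigma}/\partial\chi'$ yields, via $\square D_{0,F}=-i\delta$ (and its Dirac/Maxwell analogues), terms proportional to $\delta(x-x')$ and its first derivative, times Wick bilinears in the matter fields — precisely the RHS of \eqref{eq:violWard}.

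Next I would carry out the three case computations. For the scalar \eqref{eq:Theta_scalar}: $\partial\Th^{\mu\nu}_\phi/\partial\phi$ is a second-order differential operator acting on $\phi$, the only contraction is $\vev{\Tren\phi\phi'}=D_{0,F}$ (massless) or $D_{m,F}$, and $\del_\mu$ applied to the product collapses one d'Alembertian via $\square D_{m,F}=-i\delta+m^2 D_{m,F}$; the mass term cancels against the $m^2$ piece in $\Th^{\mu\nu}_\phi$ (this is where the on-shell/conservation structure is used), leaving a pure contact term. For Maxwell \eqref{eq:Theta_Maxwell}: the contraction is $\vev{\Tren F_{\mu\nu}F'_{\ka\la}}$ from \eqref{eq:tpF1} with $W_0\to D_{0,F}$; here one uses $\del^\mu F_{\mu\nu}=0$ on-shell and $\square D_{0,F}=-i\delta$, together with the Bianchi identity, to reduce to contact terms. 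For Dirac \eqref{eq:Theta_fermi}: the contractions are $\vev{\Tren\psi\bar\psi'}$ and $\vev{\Tren\bar\psi\psi'}$, and one uses the Dirac equation plus $(i\slashed\del-m)S_F=i\delta$ so that $\del_\mu$ on the symmetrized current again yields only $\delta(x-x')$-supported terms. In each case I expect the contact term, after reinstating the prefactor $-\tfrac12 h'_{\rho\sigma}w_\nu$ and symmetrizing in $x\lra x'$, to organize itself — using the symmetry $\Th^{\mu\nu}=\Th^{\nu\mu}$ and the on-shell relations — into the single universal expression $-i\,\Th^{\mu\nu}_{\rm mat}\,w^\ka(\del_\ka h_{\mu\nu}-\del_\mu h_{\ka\nu}-\del_\nu h_{\ka\mu})\,\delta(x-x')$. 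Finally, the claim that this obstruction is \emph{non-resolvable} follows because its form is manifestly \emph{not} of the shape required by \eqref{eq:SI2}: the combination $w^\ka(\del_\ka h_{\mu\nu}-\del_\mu h_{\ka\nu}-\del_\nu h_{\ka\mu})$ is not of the form $\delta_c(\text{local quartic in }h)$ up to derivatives — it genuinely involves the auxiliary field $w$ undifferentiated in a way that $\delta_c$ of a Wick polynomial in $h$ alone cannot reproduce — so no quartic induced matter term $L_{2,\rm mat}$ can cancel it.

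The main obstacle I anticipate is bookkeeping the contact terms correctly: when $\del_\mu$ acts on $\vev{\Tren\varphi\chi'}$ one gets not only the $\square D=-i\delta$ reduction but also lower-order pieces (first derivatives of $\delta$, or terms where the d'Alembertian is only partially formed), and showing that all of these either vanish by the on-shell matter field equations or recombine into the stated universal form — uniformly across scalar, Maxwell and Dirac — is the delicate step. A secondary subtlety is that one must track the renormalization freedom: the claim implicitly assumes that the matter propagators are renormalized in the minimal way (no finite local counterterms that would spoil the Ward structure, consistent with the scaling-dimension equality imposed around \eqref{eq:freedom_of_reno}), and that the graviton-propagator renormalization has already been fixed so that only the $\vev{\Tren\varphi\chi'}$ term survives; I would state these as standing assumptions inherited from \aref{a:Propagators} rather than re-derive them. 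Modulo this, the universality of \eqref{eq:obst_univ} is essentially forced: the obstruction can only depend on $\Th^{\mu\nu}_{\rm mat}$, $w$, $h$ and one derivative (by power counting and by the fact that the string dependence enters only through $w$ and uncontracted derivatives), and Lorentz covariance plus the symmetry of $\Th^{\mu\nu}$ then leave only the displayed combination.
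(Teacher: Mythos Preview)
Your proposal follows the same overall strategy as the paper --- case-by-case Ward-identity violations producing contact terms --- but there is a genuine gap in the central step.

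You write that the contact terms ``organize themselves into the single universal expression'' $\sO_{2,\rm mat}$. They do not. In each case the contact terms split into \emph{two} pieces: a resolvable part $i\,\delta_c(L_{2,0,\rm mat})\,\delta(x-x')$ for a concrete induced interaction $L_{2,0,\rm mat}$ (e.g.\ $\doublcont{hh\Th_\phi}+\tfrac12\doublcont{hh}L_{0,\phi}$ in the scalar case), \emph{plus} the residual $\sO_{2,\rm mat}$. Isolating the non-resolvable obstruction means first identifying and subtracting off everything that \emph{is} a string variation of some local quartic in $(h,h,\text{matter},\text{matter})$; only what remains after this subtraction is the claimed universal form. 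In the Dirac case this step is nontrivial: one must list a basis of admissible candidates $L_{2,\psi}^n$ (the paper finds four) and show that a specific combination of them absorbs all but $\sO_{2,\psi}$. Your proposal skips this entirely, and the heuristic ``Lorentz covariance plus one derivative leaves only the displayed combination'' is false --- there are several independent Lorentz structures of type $\Th^{\mu\nu}w^\ka\partial h$, and the specific coefficients $(+1,-1,-1)$ come out of the computation, not from symmetry alone.

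Your argument for non-resolvability is also incorrect as stated. You claim that ``$w$ undifferentiated cannot come from $\delta_c$ of a Wick polynomial in $h$'', but $\delta_c(h_{\mu\nu})=\partial_\mu w_\nu+\partial_\nu w_\mu$, so after one integration by parts $\delta_c$ of a quartic $L_2(h,h,\text{matter},\text{matter})$ \emph{does} produce undifferentiated $w$ multiplied by $\Theta$-type bilinears and $\partial h$. That is precisely how the resolvable piece $\delta_c(L_{2,0,\rm mat})$ arises. Non-resolvability of the residue is established by exhausting the finite-dimensional space of candidate $L_2$'s with the given field and derivative content and checking that none of their string variations reproduce $\sO_{2,\rm mat}$ modulo divergences --- a linear-algebra check, not a structural impossibility of the kind you assert.
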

The proof proceeds by a case-by-case analysis in the next three
subsections. The cases are mutually dissimilar enough that one is
allowed to conjecture universality of the above form.

\subsubsection{Graviton coupling to scalar fields}
\label{ssec:scalar}

Having seen that the contractions $\vev{\Tren hh'}$ in
$\underbracket{L_{1,\phi}L'}\!\!_{1,\phi}$ do not contribute to the
obstruction, we study the contractions of the scalar fields. Since the
string-localized graviton potential is traceless, in view of
Eq.~\eqref{eq:TL1mL1m:a} those are:
\begin{align}
\label{eq:scalar_tree_phi_cont}
\sfrac{1}{4} h_{\mu\nu}h'_{\rho\sigma}  \sum_{\varphi,\chi'} 
\frac{\partial \Th^{\mu\nu}_\phi}{\partial \varphi} \vev{\Tren \varphi \chi'}
\frac{\partial {\Th'_\phi}^{\rho\sigma}}{\partial\chi'}
= h_{\mu\nu}\del^\mu \phi \vev{\Tren\,\del^\nu
\phi\,{\del'}^\sigma\phi'}h'_{\rho\sigma}{\del'}^\rho \phi'.
\end{align}
Because the scaling dimension of the renormalized propagator on the
right-hand side of \eqref{eq:scalar_tree_phi_cont} equals the
spacetime dimension, it admits one free renormalization parameter:
\begin{align}
\label{eq:scalarreno}
\vev{\Tren \del^\nu \phi \, {\del'}^\sigma \phi' } &= \vev{T_0\,
\del^\nu \phi\,{\del'}^\sigma \phi'} + \vev{T_r \del^\nu \phi\,
{\del'}^\sigma \phi'}
\\ 
\notag &= \vev{T_0 \del^\nu \phi {\del'}^\sigma \phi' } + ic_\phi
\eta^{\nu\sigma}\delta(x-x').
\end{align}
The non-kinematic part of \eqref{eq:scalar_tree_phi_cont} can
always be absorbed into an \textit{induced interaction}:
\begin{align*}
L_{2,r,\phi} := c_\phi (hh)^{\mu\nu} \del_\mu \phi {\del}_\nu \phi =
c_\phi \big(\doublcont{hh\Th_\phi} + \doublcont{hh} L_{0,\phi} \big),
\end{align*}
where $L_{0,\phi}$ is the quadratic (free) scalar ``Lagrangian''
visible in \eqref{eq:Theta_scalar}, so that
\begin{align*}
i L_{2,r,\phi} \delta(x-x') +i^2 h_{\mu\nu} \del^\mu \phi\,\vev{T_r
\del^\nu \phi {\del'}^\sigma \phi'}\, h'_{\rho\sigma} {\del'}^\rho
\phi' = 0.
\end{align*}

In conclusion, \textit{only the kinematic part} can contribute a
possible obstruction to SI. Since one has for it:
\begin{align*}
\del_\mu \vev{T_0 \del^\mu \phi {\del'}^\sigma \phi' } = \square
{\del'}^\sigma D_{m,F}(x-x') = -m^2 \vev{T_0\,\phi {\del'}^\sigma
\phi'} - i{\del'}^\sigma\delta(x-x'),
\end{align*}
we find
\begin{align*}
&\dc\Big(\underbracket{L_{1,\phi}L'}\!\!_{1,\phi}\big\vert_{T_0}\Big)
= \delta_c \big( h_{\mu\nu} \del^\mu \phi \, \vev{T_0 \del^\nu \phi
{\del'}^\sigma \phi'}\,h'_{\rho\sigma} {\del'}^\rho \phi' \big)
\\
&\modd w_\mu \del^\mu \phi \, h'_{\rho\sigma} {\del'}^\rho
\phi'{\del'}^\sigma\cdot i\delta(x-x') + (x\lra x') \modd: i\delta_c
\big(L_{2,0,\phi}\big)\delta(x-x') + \sO_{2,\phi}(x,x'),
\end{align*}
with the \textit{induced interaction} density:
\begin{align*}
L_{2,0,\phi} = (hh)_{\mu\nu} \Th_\phi^{\mu\nu} + \sfrac{1}{4}
\doublcont{hh} \big((\del\phi\del\phi) - m^2\phi^2\big) =
\doublcont{hh\Th_\phi} + \sfrac{1}{2} \doublcont{hh} L_{0,\phi},
\end{align*}
plus the remaining \textit{obstruction}:
\begin{align}
\label{eq:obst_scalar}
\sO_{2,\phi}(x,x') = \Th_\phi^{\mu\nu} w^\ka \big(\del_\mu h_{\ka\nu}
+ \del_\nu h_{\ka\mu} - i\del_\ka h_{\mu\nu} \big)\delta(x-x'),
\end{align}
which cannot be resolved by adding an induced term with field content 
$(\phi,\phi, h ,h)$. In total, we collect: 
\begin{align*}
\dc\Bigl(\underbracket{L_{1,\phi}L'}\!\!_{1,\phi}\Bigr) \modd \delta_c
\bigl( (1+c_\phi) \doublcont{hh \Th_\phi} + i\big(\sfrac12 + c_\phi\bigr)
\doublcont{hh} L_{0,\phi} \bigr)\delta(x-x') + \sO_{2,\phi}(x,x').
\end{align*}

\subsubsection{Graviton coupling to electromagnetic fields}
\label{ssec:Maxwell}

Keeping in mind that that contractions $\vev{\Tren hh'}$ in
$\underbracket{L_{1,\phi}L'}\!\!_{1,\phi}$ did not contribute to the
obstruction, we now study the contractions for electromagnetic fields:
\begin{align}
\label{eq:Maxwell_tree_FF_cont}
&\quad\sfrac{1}{4} h_{\mu\nu} h'_{\rho\sigma} \sum_{\varphi,\chi'}
\frac{\partial \Th^{\mu\nu}_F }{\partial \varphi} \vev{\Tren \varphi\chi'}
\frac{\partial{\Th'_F}^{\rho\sigma}}{\partial\chi'} = h_{\mu\nu}{F^{\mu}}_{\ka} \vev{\Tren
F^{\nu\ka} {F'}^{\sigma\la}}h'_{\rho\sigma} {{F'}^{\rho}}_{\la}.
\end{align}
Similarly to the scalar case, the renormalized propagator of the Maxwell 
field has an ambiguity:
\begin{align*}
&\vev{\Tren F^{\nu\ka} {F'}^{\sigma\la}} = \vev{T_0 F^{\nu\ka}
{F'}^{\sigma\la}} + \vev{T_r F^{\nu\ka} {F'}^{\sigma\la}}
\\
&= \vev{T_0 F^{\nu\ka} {F'}^{\sigma\la}} +i c_F\bigl(\eta^{\nu\sigma}
\eta^{\ka\la} - \eta^{\nu\la} \eta^{\ka\sigma}\bigr)\delta(x-x'),
\end{align*}
with a free parameter $c_F$. As before, the non-kinematic part can 
be absorbed into an induced interaction
\begin{align}
L_{2,r,F} :=&\; c_F \big( (hh)^{\rho}_\mu {F^{\mu\ka}} F_{\rho\ka} -
h_{\mu\nu} h_{\rho\sigma} {F^{\mu\sigma}} {F}^{\rho\nu}\big) \\ \notag
=&\; -c_F \big( \doublcont{hh\Th_F} + \doublcont{hh} L_{0,F} +
h_{\mu\nu} h_{\rho\sigma} {F^{\mu\sigma}} {F}^{\rho\nu} \big),
\end{align}
where $L_{0,F}$ denotes the quadratic (free) Maxwell Lagrangian, so that
\begin{align}
i L_{2,r,F} \delta(x-x') + i^2 h_{\mu\nu} {F^{\mu}}_{\ka} \vev{T_r
F^{\nu\ka} {F'}^{\sigma\la}} h'_{\rho\sigma} {{F'}^{\rho}}_{\la} = 0.
\end{align}
Hence, like in the scalar case, the only possible obstruction to string 
independence can come from the kinematic part in 
\eqref{eq:Maxwell_tree_FF_cont}. We have:
\begin{align}
\label{eq:div_T0FF}
\del_{\mu} \vev{T_0 F^{\mu\nu} {F'}^{\ka\la}} = \big( \eta^{\nu\la}
{\del'}^\ka - \eta^{\nu\ka} {\del'}^\la \big) i\delta(x-x') = -i \big(
\eta^{\nu\la} \del^\ka - \eta^{\nu\ka} \del^\la \big)\delta(x-x').
\end{align}
A rather lengthy computation, requiring use of the Bianchi identity
for the Maxwell field and~\eqref{eq:div_T0FF} -- cf.~\cite{GPhD} in
this repect -- yields
\begin{align*}
\dc\Bigl(\underbracket{L_{1,F}L'}\!\!_{1,F}\big\vert_{T_0}\Bigr) =
i\delta_c \big(h_{\mu\nu} h'_{\rho\sigma}\big) {F^{\mu}}_{\ka}
{{F'}^{\rho}}_{\la} \vev{T_0 F^{\nu\ka} {F'}^{\sigma\la}} \modd
\delta_c \big(L_{2,0,F}\big)\delta(x-x')+ \sO_{2,F}(x,x'),
\end{align*}
with the induced interaction density: 
\begin{align}
\notag
L_{2,0,F} &= - F_{\nu\la} {F_\mu}^{\la} (hh)^{\mu\nu} - \sfrac{1}{2}
h_{\mu\nu} {h}_{\rho\sigma} F^{\mu\rho} F^{\nu\sigma} + \sfrac{1}{8 }
F_{\nu\la} {F}^{\nu\la} \doublcont{hh}
\\
&= \doublcont{hh \Th_F} + \sfrac{1}{2} \doublcont{hh} L_{0,F} -
\sfrac{1}{2} h_{\mu\nu} {h}_{\rho\sigma} F^{\mu\rho} F^{\nu\sigma}
\end{align}
and the remaining obstruction: 
\begin{align}
\label{eq:obst_Maxwell}
\sO_{2,F}(x,x') = -i\Th^{\mu\nu}_F w^\ka \big( \del_\ka h_{\mu\nu} -
\del_\mu h_{\ka\nu} - \del_\nu h_{\ka\mu} \big)\delta(x-x'),
\end{align}
totally analogous to the one in the scalar case. As in the latter, the
obstruction $\sO_{2,F}$ cannot be resolved by an induced interaction
with field content $(F,F,h,h)$. In all, one finds:
\begin{align*}
\dc\bigl(\underbracket{L_{1,F}L'}\!\!_{1,F}\bigr) =&\; \dc \bigl(
(1-c_F) \doublcont{hh\Th_F} + \bigl(\sfrac12 - c_F\bigr) \doublcont{hh}
L_{0,F}
\\
&\; -i\bigl(\sfrac12 + c_F\bigr) F_{\nu\rho} F_{\la\sigma} h^{\nu\la}
{h}^{\rho\sigma} \bigr)\delta(x-x') + \sO_{2,F}(x,x').
\end{align*}

\subsubsection{Graviton coupling to Dirac fields}
\label{ssec:Dirac}

The Fermi case is more involved than the scalar case and, differently
from the previous Maxwell case, was not treated in \cite{GPhD}.
For these reasons we give a slightly more detailed treatment here.

Since, as in the other cases, the contraction $\vev{\Tren hh'}$ does not
contribute to the obstruction, we study only the contractions of the
Dirac fields. As before, the obstruction is caused by the violation of the Ward
identity corresponding to the fermion stress-energy tensor under the
kinematic part of the time-ordered product,
\begin{align}
\label{eq:dcLpsiLpsi-1}  \notag
  \dc\Bigl(\underbracket{L_{1,\psi}L'}{}\!\!_{1,\psi}\big\vert_{T_0}\Bigr)&=\delta_c\Biggl( \sfrac{1}{4} h_{\mu\nu}h'_{\rho\sg} 
\sum_{\varphi,\chi'} \frac{\partial \Th^{\mu\nu}_\psi }{\partial \varphi} \vev{T_0
\varphi \chi'} \frac{\partial {\Th'_\psi}^{\rho\sg}}{\partial \chi'}  \Biggr)
\\
&\modd -\sfrac{1}{2} w_{\nu} h'_{\rho\sg}\del_\mu
\sum_{\varphi,\chi'} \frac{\partial \Th^{\mu\nu}_\psi }{\partial \varphi} \vev{T_0
\varphi \chi'} \frac{\partial {\Th'_\psi}^{\rho\sg}}{\partial \chi'} +(x\lra x').
\end{align}
A somewhat protracted calculation yields: 
\begin{align*}
\del_\mu \sum_{\varphi,\chi'} \frac{d \Th^{\mu\nu}_\psi }{d \varphi}
\vev{T_0 \varphi \chi'} \frac{d {\Th'_\psi}^{\rho\sg}}{d \chi'} =&
\,\sfrac{1}{16} \Bigl[\overline{\psi} \Bigl(-\lrd_\nu \gamma_\rho + i
\gamma_\nu (i \del^\mu \gamma_\mu +m) \gamma_\rho \Bigr) \delta(x-x')
\lrd{}'_\sg \psi'
\\
&+ \overline{\psi}' \lrd{}'_\sg \delta(x-x') \Bigl(\gamma_\rho
\lrd_\nu + \gamma_\rho i\ld_\mu \gamma^\mu -m) i\gamma_\nu \Bigr)\psi
\Bigr] + (\rho \lra \sg).
\end{align*}
After several integrations by parts, and by use of the equations of
motions of the fields, the string variation \eqref{eq:dcLpsiLpsi-1} is found to be:
\begin{align}
\label{eq:dcLpsiLpsi-2}  \notag
\dc\Bigl(\underbracket{L_{1,\psi}L'}{}\!\!_{1,\psi}\big\vert_{T_0}\Bigr)
&\;\modd \Big[ w^\nu \del_\nu{h}^{\rho\sg} \Big(
\overline{\psi}\gamma_\rho \del_\sg \psi - \del_\sg
\overline{\psi}\gamma_\rho \psi \Big) + w^\nu {h}^{\rho\sg} \Big(
\overline{\psi}\gamma_\rho \del_\sg \del_\nu \psi - \del_\sg \del_\nu
\overline{\psi}\gamma_\rho \psi \Big) 
\\ \notag
&\quad +\sfrac{1}{4} \del^\mu w^\nu {h}^{\rho\sg} \Big(
\overline{\psi} \gamma_\mu \gamma_\nu \gamma_\rho \del_\sg \psi
+\del_\sg \overline{\psi} \gamma_\mu \gamma_\nu \gamma_\rho \psi \Big)
\\
&\quad +\sfrac{1}{2} \del_\rho w^\nu {h}^{\rho\sg} \del_\sg
\overline{\psi} \gamma_\nu \psi -\sfrac{1}{2} \del^\mu w_\rho
{h}^{\rho\sg} \del_\sg \overline{\psi} \gamma_\mu \psi
\Big]\,\delta(x-x').
\end{align}
Due to the many identities satisfied by the Dirac fields, the
$\gamma$-matrices and the string-localized graviton field, there are
just four linearly independent (up to total divergences) hermitean candidates
for induced interactions with the field and derivative content
$(\del,\overline{\psi},\psi,h, h)$:
\begin{align}
  \label{eq:L2n_fermi}
  L_{2,\psi}^1 &:= (hh)_{\mu\nu} \Th_\psi^{\mu\nu},
&&L_{2,\psi}^2 := ( h h)^{\rho\sg} \del_\sg j_\rho, 
\\ \notag
L_{2,\psi}^3 &:= \sfrac{i}{2} \doublcont{hh} \overline{\psi}
\gamma^\mu \lrd_\mu \psi = \doublcont{hh} \Th_{\mu,\psi}^{\mu},
&&L_{2,\psi}^4 := \sfrac i2(h\del^\mu h)^{\rho\sg} \overline{\psi} (\gamma_\mu
\gamma_\rho \gamma_\sg-\gamma_\sg\gamma_\rho \gamma_\mu
) \psi.
\end{align}
Yet another ponderous calculation allows to split \eqref{eq:dcLpsiLpsi-2} as
\begin{align*}
\dc\Bigl(\underbracket{L_{1,\psi}L'}{}\!\!_{1,\psi}\big\vert_{T_0}\Bigr)
&\modd i\delta_c \big(L_{2,0,\psi}\big)\,\delta(x-x')+
\sO_{2,\psi}(x,x')
\end{align*}
with the induced interaction density:
\begin{align}
  \label{eq:L20_fermi}
L_{2,0,\psi} &= \sfrac{3}{4}  L_{2,\psi}^1-\sfrac
               1{8} L_{2,\psi}^4,
\end{align}
and the remaining non-resolvable obstruction:
\begin{align}
\label{eq:obst_fermi}
\sO_{2,\psi}(x,x') = -i\Th^{\mu\nu}_\psi w^\ka \big(\del_\ka
h_{\mu\nu} - \del_\mu h_{\ka\nu} - \del_\nu h_{\ka\mu} \big)
\delta(x-x').
\end{align}

\begin{remark}
Similarly to the scalar and Maxwell cases, some of the fermion
propagators have ambiguities. The ambiguous propagators are
$\vev{\Tren \del_\mu \overline{\psi} \psi'}$, $\vev{ \Tren
\overline{\psi} \del'_\mu\psi'} $ and $\vev{\Tren \del_\mu
\overline{\psi} \del'_\nu\psi'}$. Each of the first two has one free
parameter corresponding to a correction $\gamma_\mu \delta(x-x')$. The
third one has several free parameters, corresponding to corrections
proportional to $\delta(x-x')$ and $\del_\rho\delta(x-x')$, with
several prefactors exhibiting combinations of $\gamma_\rho$ and
$\eta_{\rho\sg}$. Since all these corrections correspond to induced
interactions, they cannot resolve the obstruction
\eqref{eq:obst_fermi}. Therefore they are secondary to the present
issue, and thus we omit further detail.
\end{remark}

This concludes the proof of Prop.~\ref{prop:matter_obst}.
\hfill$\square$

\subsection{Graviton self-coupling}
\label{sec:Selfcoupling}

A cubic self-coupling of massless string-localized fields of
helicity~2 without derivatives cannot satisfy the $L$-$Q$-pair condition
for a string-independent action~\cite{GassMThesis}. The next
possibility to consider is cubic self-couplings with two derivatives.
\textit{A priori} and dropping all divergences, the general form of
$L_1$ may consist of only three terms:
\begin{align*}
h^{\mu\nu}(\al_1\del_\mu h^{\rho\sg} \del_\nu h_{\rho\sg} + \al_2
\del^\rho h^\sg_\mu\del_\sg h_{\nu\rho} + \al_3\,
\del^\rho h^\sg_\mu\del_\nu h_{\sg\rho}).
\end{align*}
In fact Nature chooses $\al_1=\frac12, \al_2=1,\al_3=0$: a cubic
self-coupling of massless string-localized fields of helicity 2, which
satisfies the $L$-$Q$-pair condition, is \textit{unique} up to a total
divergence and a multiplicative constant \cite{GassMThesis}. 

Since there is the freedom of adding divergences, we will take the
liberty to subtract the term $h^{\mu\nu}\del^\rho h_{\sg\mu}\del_\rho
h^\sg_\nu$, because it entails Eq.~\eqref{eq:dU} below, which is going
to simplify computations, including the renormalizations. In fact, by
adding divergences $L_1$ can be given many other forms with
complementary conveniences. Our choice~\eqref{eq:L1} for $L_1$ below
\textit{is identical} with the classical expansion of the Einstein
action -- which also has a certain freedom of adding divergences
\cite[Chap.\ 5]{ScharfLast} -- when the classical metric deviation
field $h_{\mu\nu}$ is replaced by the string-localized quantum field
$h_{\mu\nu}(x;c)$ of helicity 2 -- see \aref{a:Expansion}.

For a better grasp of the contraction scheme involved in the present
problem, on regarding $(\pa_\mu h)_{\ka\la}= \pa_\mu h_{\ka\la}$ as a
symmetric tensor, we rewrite $L_1$ in the notation of Convention
\ref{def:notation} as:
\begin{align}
\label{eq:L1}
L_1 &= h^{\mu\nu}\bigl[\sfrac12 \doublcont{\del_\mu h \del_\nu h} 
+ \del^\ka h_{\mu\la} \del^\la h_{\nu\ka}  
- (\pa_\ka h \pa^\ka h)_{\mu\nu} \bigr].
\end{align}
Turning to perturbation theory, let us write the Wick expansion as:
\begin{align}
\label{eq:Wick}
\underbracket{L_1\chi'\!} =
V^{\mu\nu}\vev{\Tren h_{\mu\nu}\chi'}+W^{\ka,\mu\nu}
\vev{T_{\rm ren}\pa_\ka h_{\mu\nu}\chi'},
\end{align}
where we have introduced the quadratic fields
\begin{align*}
V^{\mu\nu}&:= \sfrac12\bb{\pa^\mu h\pa^\nu h} + \pa_\ka
h^{\mu\la} \pa_\la h^{\nu\ka} - (\pa_\ka h \pa^\ka h)^{\mu\nu},
\\
W^{\ka,\mu\nu} &:= h^{\ka\la}\pa_\la h^{\mu\nu} + (h\pa^\nu h
)^{\mu\ka} + (\pa^\mu h h)^{\ka\nu} - \pa^\ka(hh)^{\mu\nu}.
\end{align*}
We have to separately renormalize $\vev{\Tren hh'}$ and $\vev{\Tren
\pa hh'}$, so as to respect the trace conditions. It turns out that
the SI condition can be fulfilled with
\begin{align}
\label{eq:Td=dT}
\vev{T_{\rm ren}\pa_\ka h_{\mu\nu}\chi'}=\pa_\ka \vev{T_{\rm ren}
h_{\mu\nu}\chi'}
\end{align}
both for $\chi'=h'$ and $\chi'=\pa 'h'$. Thus, with an integration by
parts,
\begin{align}
\label{eq:UVW} 
\underbracket{L_1\chi}\!'\modd U^{\mu\nu}\vev{\Tren h_{\mu\nu}\chi'}, 
\quad
\word{with} U^{\mu\nu}: = V^{\mu\nu}-\pa_\ka W^{\ka,\mu\nu},
\end{align}
and in particular:
\begin{align}
\label{eq:Wickd2}
\underbracket{L_1L}\!'_1 \modd
U^{\mu\nu}\vev{\Tren h_{\mu\nu}h'_{\rho\sg}}U'^{\rho\sg}.
\end{align}
The field $U$ satisfies:
\begin{align}
\label{eq:dU}
\pa_\mu U^{\mu\nu} = \pa^\nu S \quad\word{with} S =
\sfrac18\square\bb{hh}- \sfrac12\pa_\ka\pa_\la(hh)^{\ka\la},
\end{align}
and there holds the identity 
\begin{align}\label{eq:US}
  U_\mu^\mu -2S = \sfrac12 \square\bb{hh}.
\end{align}

Now we can compute the obstruction of the string-localized graviton
self-interaction, exploiting features of the kinematic and
renormalized propagators. The latter are expressed with the help of
string integrated differential operators $a_{\mu,\rho\sg}$,
$b_{\rho,\sg}$, $c_\rho$ acting on the Feynman propagator, which are
displayed in \aref{a:Propagators}. The method is a ``cascade of
integrations by parts'', illustrating the power of the feature of SQFT
that string-dependent parts are derivatives.

We begin with the kinematic propagator, of which we separate in the
first step the string-independent part:
\begin{align}
\vev{T_{0,\ast}h_{\mu\nu}h'_{\rho\sg}} =
\sfrac12\big[\eta_{\mu\rho}\eta_{\nu\sg}+\eta_{\mu\sg}\eta_{\nu\rho}-
\eta_{\mu\nu}\eta_{\rho\sg}\big]D_{0,F}(x-x').
\end{align}
It does coincide with the propagator that one would use in gauge
theory, which is derived from an indefinite two-point function, and
is not traceless.%
\footnote{In the gauge-theoretic setting there is no need for
renormalization of the propagators, because the field cannot be set to
zero on the indefinite metric Fock space; so there are no trace
conditions.}

The string-dependent part is of the form \eqref{eq:T0chh}:
\begin{align}
\vev{T_{0,c}h_{\mu\nu}
h'_{\rho\sg}}&=\sfrac12\big[\pa_{(\mu}a_{\nu),\rho\sg}+
a'_{\mu\nu,(\rho}\pa_{\sg)}\big] D_{0,F}(x-x'),
\end{align}
where $\pa_{(\mu}a_{\nu),\rho\sg}$ stands for the symmetric sum
$\pa_{\mu}a_{\nu,\rho\sg} + \pa_{\nu}a_{\mu,\rho\sg}$. The
string-integrated differential operators $a_{\nu,\rho\sg}$ and
$a_{\mu\nu,\rho}$ are detailed in \eqref{eq:A} and \eqref{eq:A'}. This
gives:
\begin{align}
\notag 
&T_0\underbracket{L_1L}\!'_1 \modd U^{\mu\nu}
\vev{T_{0,\ast}h_{\mu\nu}h'_{\rho\sg}}
U'^{\rho\sg}+U^{\mu\nu} \vev{T_{0,c}h_{\mu\nu}h'_{\rho\sg}}
U'^{\rho\sg}
\\
\label{self_kin}
&= U^{\mu\nu} D_{0,F} U'_{\mu\nu} -\sfrac12U_\mu^\mu D_{0,F}
U'^\rho_\rho +\sfrac12 U^{\mu\nu}
\bigl(\pa_{(\mu}a_{\nu),\rho\sg}+a'_{\mu\nu,(\rho}\pa_{\sg)}\bigr)
D_{0,F} U'^{\rho\sg}.
\end{align}
Of the last term, we compute the contribution of
$\pa_{(\mu}a_{\nu),\rho\sg}$:
\begin{align}
\label{eq:casc}
U^{\mu\nu} \sfrac12\pa_{(\mu}a_{\nu),\rho\sg} D_{0,F} U'^{\rho\sg}
\modd& -\pa_\mu U^{\mu\nu} a_{\nu,\rho\sg} D_{0,F} U'^{\rho\sg} =
-\pa^\nu S a_{\nu,\rho\sg}D_{0,F} U'^{\rho\sg}
\\
\notag 
\modd & S  a_{\nu,\rho\sg}\pa^\nu D_{0,F}
U'^{\rho\sg}= S (\eta_{\rho\sg} + b_{\rho\sg} \square +
c_{(\rho}\pa_{\sg)}) D_{0,F} U'^{\rho\sg}
\\ 
\notag 
\modd & S D_{0,F} U'^\rho_\rho - iS b_{\rho\sg}\delta(x-x')
U'^{\rho\sg}+2S c_{\rho} D_{0,F} \pa'^\rho S'
\\
\notag 
\modd & S D_{0,F} U'^\rho_\rho - iSb_{\rho\sg}\delta(x-x')
U'^{\rho\sg}+2S c_{\rho}\pa^\rho D_{0,F} S'.
\end{align}
In the second line, we have used the identity \eqref{eq:abc}.
Likewise, we now use from \eqref{eq:bc}
\begin{align*}
c_\rho \pa^\rho = -1 + \sfrac14 \big(I'^2-(II')\big)\square.
\end{align*}
Adding the corresponding contribution from
$a'_{\mu\nu,(\rho}\pa_{\sg)}$ in \eqref{self_kin}, we get
\begin{align}
\label{eq:TL1L1}
&T_0\underbracket{L_1L}\!'_1 \modd U_{\rho\sg} D_{0,F} U'^{\rho\sg}
-\sfrac12(U_\mu^\mu -2S) D_{0,F} (U'^\rho_\rho-2S') -2S D_{0,F}S'
\\
\label{eq:TL1L1_string}
&- iS b_{\rho\sg} \delta(x-x') U'^{\rho\sg} - iU^{\mu\nu}
b'_{\mu\nu}\delta(x-x') S'- \frac i2 S (I-I')^2\delta(x-x')S'.
\end{align}
The second line \eqref{eq:TL1L1_string} contains string-integrated
$\delta$ functions -- which we must \textit{not have} by condition
\eqref{eq:SI2}. Lemma \ref{lem:cancel_self} in \aref{a:Propagators}
proves that the renormalization part of the propagator, that is
$T_r\underbracket{L_1L}\!'_1$, exactly cancels
\eqref{eq:TL1L1_string}. Thus the obstruction of the self-interaction
is the string variation of the first line~\eqref{eq:TL1L1}.

The string variations of the fields $U$ and $S$ are computed to be:
\begin{align}
\label{eq:deltaU}
&\delta_c\big(U^{\mu\nu}\big) = \square\big(2(\pa^\mu w_\ka+\pa_\ka
w^\mu)h^{\ka\nu}) - \square \big(w_\ka(\pa^\ka h^{\mu\nu} -\pa^\mu
h^{\ka\nu}- \pa^\nu h^{\mu\ka})\big) + \pa^{(\mu}K^{\nu)};
\\
\label{eq:deltaS} 
&\delta_c\big(S\big) = \sfrac14 \delta_c \big(\square{\bb{hh}}\big) +
\pa_\mu K^\mu;\!\!\!\word{with}
\\
\label{eq:K}
&K^\mu = -\square(w_\ka h^{\ka\mu}) - h^{\ka\la}\pa_\ka\pa_\la w^\mu,
\quad \pa_\mu K^\mu = \delta_c\big(S-\sfrac14 \square\bb{hh}\big).
\end{align}
When the string variation of \eqref{eq:TL1L1} is computed, there arise 
$K$-terms not containing a wave operator, hence producing undesired bulk
contributions. Happily, these bulk terms cancel each other:
\begin{align}
\notag
& \pa_{(\mu}K_{\nu)} D_{0,F} U'^{\mu\nu} +(x\lra x')\modd -2K_\nu
D_{0,F} \pa'^\nu S'+(x\lra x')
\\
\label{eq:KDU}
&\modd 2\pa_\nu K^\nu D_{0,F} S' + (x\lra x') = 2\delta_c\big(S
D_{0,F} S'\big) + iS\delta_c\big(\bb{hh}\big)\delta(x-x'),
\end{align}
by \eqref{eq:K}. Thus, the obstruction $\delta_c\big(T_{\rm
ren}\underbracket{L_1L}\!'_1\big)$ of the self-interaction is (up to
divergences)
\begin{align}
\notag
&-2\bigl(\delta_c\big((hh)^{\mu\nu}\big) - w_\ka(\pa^\ka h^{\mu\nu} -
\pa^\mu h^{\ka\nu}- \pa^\nu h^{\mu\ka})\big) U_{\mu\nu} +
\delta_c\big(\bb{hh}\big) S
\\
&\quad +\sfrac1{8}\delta_c\big(\bb{hh}\square\bb{hh}\bigr),
\label{eq:self_obst}
\end{align}
multiplying $i\delta(x-x')$. 

To establish SI at second order, it remains to show that
\eqref{eq:self_obst} is the string variation of a quartic induced
interaction $L_2$:
\begin{align}
\label{eq:deltaL2}
\eqref{eq:self_obst} \stackrel!=\delta_c(L_2).
\end{align}
 \begin{theorem}
\label{thm:resol_self_obstr}
The second order obstruction of the graviton
self-interaction is resolved by the induced interaction
\begin{align}
\label{eq:L2_ind}
L_{2} \modd -2(\pa_\nu h h h \pa_\mu h)^{\mu\nu} -\sfrac12\bb{h\pa_\mu
h}\bb{h\pa^\mu h} - 2\bb{h\pa_\mu h \pa_\nu h}h^{\mu\nu};
\end{align}
where $L_2$ is uniquely determined up to total divergences.
\end{theorem}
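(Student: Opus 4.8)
The task is to show that the $L_2$ displayed in \eqref{eq:L2_ind} satisfies the requirement \eqref{eq:deltaL2}, i.e.\ $\delta_c(L_2)\modd\eqref{eq:self_obst}$, and that it is the unique quartic interaction with this property modulo total divergences. Two facts are used throughout: $\delta_c$ commutes with $\partial_\mu$, so it maps total divergences to total divergences and ``$\modd$'' may be used freely on either side of $\delta_c$; and on the physical Hilbert space $h_{\mu\nu}$ is symmetric, traceless, divergenceless and satisfies $\square h_{\mu\nu}=0$, which collapses many terms — e.g.\ $\square(hh)^{\mu\nu}=2\,\partial_\rho h^{\mu\ka}\partial^\rho h_\ka{}^\nu$, $\partial^\mu(hh)_{\mu\nu}=h_{\mu\ka}\partial^\mu h^\ka{}_\nu$, and $\partial_\mu\bb{hh}=2\,\bb{h\partial_\mu h}$.

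First I would dispose of the two ``scalar'' pieces of \eqref{eq:self_obst}. Since $\square\big(\bb{hh}^2\big)=2\,\bb{hh}\,\square\bb{hh}+2(\partial_\mu\bb{hh})(\partial^\mu\bb{hh})$, one has $\bb{hh}\,\square\bb{hh}\modd-4\,\bb{h\partial_\mu h}\bb{h\partial^\mu h}$, hence $\tfrac18\,\delta_c\big(\bb{hh}\,\square\bb{hh}\big)\modd-\tfrac12\,\delta_c\big(\bb{h\partial_\mu h}\bb{h\partial^\mu h}\big)$, which is already the middle monomial of \eqref{eq:L2_ind}. For the piece $\delta_c(\bb{hh})\,S$, I would write $\delta_c(\bb{hh})\,S=\delta_c(\bb{hh}\,S)-\bb{hh}\,\delta_c(S)$, substitute \eqref{eq:deltaS}--\eqref{eq:K}, integrate the $K$-terms by parts onto $\bb{hh}$, and simplify with $\partial^\mu h_{\mu\nu}=0$ and $\square h=0$; this reduces $\bb{hh}\,\delta_c(S)$ to a $\delta_c$ of quartics plus divergences.

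The core is the term $-2\,\Phi^{\mu\nu}U_{\mu\nu}$ with $\Phi^{\mu\nu}:=\delta_c\big((hh)^{\mu\nu}\big)-w_\ka\big(\partial^\ka h^{\mu\nu}-\partial^\mu h^{\ka\nu}-\partial^\nu h^{\ka\mu}\big)$. The key observation is that \eqref{eq:deltaU} is exactly the identity $\delta_c(U^{\mu\nu})=\square\,\Phi^{\mu\nu}+\partial^{(\mu}K^{\nu)}$. Accordingly I would write $\delta_c\big((hh)^{\mu\nu}\big)U_{\mu\nu}=\delta_c\big((hh)^{\mu\nu}U_{\mu\nu}\big)-(hh)^{\mu\nu}\delta_c(U_{\mu\nu})$, insert this identity, move the $\square$ in $(hh)^{\mu\nu}\square\Phi_{\mu\nu}$ by parts onto $(hh)^{\mu\nu}$ (where $\square h=0$), push $(hh)^{\mu\nu}\partial_{(\mu}K_{\nu)}$ by parts onto $(hh)_{\mu\nu}$, and substitute $K^\mu=-\square(w_\ka h^{\ka\mu})-h^{\ka\la}\partial_\ka\partial_\la w^\mu$. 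Reorganising the resulting monomials with the structural identities $\partial_\mu U^{\mu\nu}=\partial^\nu S$ and $U^\mu{}_\mu-2S=\tfrac12\square\bb{hh}$ of \eqref{eq:dU}--\eqref{eq:US}, together with \eqref{eq:KDU}, the cubic-in-$h$/one-$w$ residue should assemble — and this ``closure'', a priori not guaranteed, is the substantive content of the theorem — into $\delta_c$ of the first and third monomials of \eqref{eq:L2_ind} modulo total divergences. Collecting the three contributions then gives $\delta_c(L_2)\modd\eqref{eq:self_obst}$ with $L_2$ as stated; as an independent check, this $L_2$ coincides with the $\kappa^2$-order term of the Einstein-action expansion, \aref{a:Expansion}.

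Uniqueness: \eqref{eq:deltaL2} is a linear equation for $L_2$ in the finite-dimensional space $\mathcal V$ of quartic, two-derivative, Lorentz-scalar Wick polynomials in $h$ modulo total divergences (reduced via tracelessness, $\partial^\mu h_{\mu\nu}=0$, $\square h=0$); having exhibited one solution, the solution set is a coset of $\ker(\delta_c|_{\mathcal V})$, and a finite linear-algebra computation — of the same cohomological flavour as the uniqueness of $L_1$ in \cite{GassMThesis} — shows $\ker(\delta_c|_{\mathcal V})=0$, i.e.\ there is no quartic two-derivative $L$-$Q$ pair besides total divergences. Hence $L_2$ is unique up to a total divergence. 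The main obstacle is the bookkeeping in the core term $-2\,\Phi^{\mu\nu}U_{\mu\nu}$: as $U$ is itself a four-term quadratic, its product with the varied quadratic $\Phi$ expands into many monomials, and the craft is to fold them back into a $\delta_c$ using \eqref{eq:dU}, \eqref{eq:US}, \eqref{eq:KDU} and the on-shell relations for $h$ rather than grinding through them one by one.
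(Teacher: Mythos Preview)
Your strategy is genuinely different from the paper's. The paper proceeds by brute force: it fixes a basis $\{S^1,\dots,S^{24}\}$ of independent monomials of the schematic form $w\cdot(\partial h)(\partial h)(\partial h)$ modulo divergences, expands the obstruction \eqref{eq:self_obst} in that basis, writes the most general quartic two-derivative Ansatz $L_2=\sum_n c_n L_2^n$ with eight candidate structures $L_2^n$, computes each $\delta_c(L_2^n)$ in the same basis, and solves the resulting overdetermined $24\times 8$ linear system. This simultaneously \emph{verifies} the closure and \emph{proves} uniqueness (the solution is unique, so $\ker(\delta_c|_{\mathcal V})=0$ follows as a byproduct). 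You instead attempt a structural route, exploiting the identity $\delta_c(U^{\mu\nu})=\square\,\Phi^{\mu\nu}+\partial^{(\mu}K^{\nu)}$ and the relations \eqref{eq:dU}--\eqref{eq:KDU} to peel off total $\delta_c$'s and divergences.

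The gap is that your argument stops precisely where the substance begins. Your treatment of the piece $\tfrac18\,\delta_c(\bb{hh}\,\square\bb{hh})$ is clean and correct, and the Leibniz splitting $\delta_c((hh)^{\mu\nu})U_{\mu\nu}=\delta_c((hh)^{\mu\nu}U_{\mu\nu})-(hh)^{\mu\nu}\delta_c(U_{\mu\nu})$ is a sensible first move. But after inserting \eqref{eq:deltaU} and integrating the $\square$ and $\partial_{(\mu}K_{\nu)}$ by parts, you are still left with (i) the explicit $w$-piece $2\,w_\ka(\partial^\ka h^{\mu\nu}-\partial^\mu h^{\ka\nu}-\partial^\nu h^{\ka\mu})\,U_{\mu\nu}$ from $\Phi$, (ii) the term $\square(hh)^{\mu\nu}\cdot\Phi_{\mu\nu}$, (iii) the $K$-contributions, and (iv) the analogous residues from the $\delta_c(\bb{hh})\,S$ piece. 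None of these is manifestly a string variation, and you do not show that their sum is one; you only say it ``should assemble'' and concede that this closure ``a priori not guaranteed, is the substantive content of the theorem''. That is exactly the computation the paper carries out explicitly. Without it you have a plausible outline, not a proof. Likewise, your uniqueness argument merely asserts $\ker(\delta_c|_{\mathcal V})=0$; this is again a concrete linear-algebra statement that needs to be checked (and is checked in the paper via the same $24\times 8$ system).

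In short: your structural reductions are valid and do shorten the bookkeeping, but to complete the proof you must still expand the surviving $w$-dependent residue in an explicit basis and verify both that it equals $\delta_c$ of the remaining two monomials of \eqref{eq:L2_ind} and that no other quartic has vanishing $\delta_c$ modulo divergences.
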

\begin{proof}
After the long preparations leading to \eqref{eq:self_obst}, the proof
is routine: work out \eqref{eq:self_obst} with \eqref{eq:deltaU} and
\eqref{eq:deltaS}, make the most general Ansatz to solve
\eqref{eq:deltaL2}, and compare coefficients. This leads to a highly
overdetermined system of twenty-four equations for eight unknowns,
with a unique solution. One obtains a large number of structures
$O(w,\del h\del h\del h)$. To control linear dependences (up to
divergences) one first detaches all derivatives from the factor
$w^\ka$ by an integration by parts. The remaining linear dependences
arise from $\pa_\ka w^\ka=0$ implying $(w\pa)X\modd0$, and from
$\square w^\ka=0$ implying $w^\ka\square X\modd 0$. One gets seven
different contraction schemes of the tensors $h$, with various
placements of the derivatives. Our basis of choice is:
\begin{align*}
\begin{array}{lll}
w^\ka h_{\ka\la}\bb{hh}:\quad
&S^1= w^\ka\cdot  h_{\ka\la}\bb{\pa^\la \pa_\al h\pa^\al h} ,\quad
&S^2= w^\ka\cdot  \pa_\al h_{\ka\la} \bb{\pa^\al h\pa^\la h},
\\
w^\ka\cdot h_{\ka\la}(hh)^{\al\be}:\quad
&S^3 = w^\ka\cdot  h_{\ka\la}(\pa^\la \pa_\al h \pa_\be h)^{\al\be} ,\quad
&S^4 = w^\ka\cdot  \pa_\al h_{\ka\la} (\pa_\be h\pa^\la h)^{\al\be}, 
\\ 
&S^5 = w^\ka\cdot  \pa_\al h_{\ka\la} (\pa_\be \pa^\la hh)^{\al\be},\quad
&S^6 = w^\ka\cdot \pa_\al\pa_\be h_{\ka\la} (h\pa^\la h)^{\al\be} , 
\\
w^\ka\cdot (hh)_{\ka\la} h^{\al\be}:\quad
&S^7 = w^\ka\cdot  (h\pa_\al\pa_\be    h)_{\ka\la} \pa^\la h^{\al\be},\quad
&S^8 = w^\ka\cdot  (\pa_\al h\pa_\be h)_{\ka\la} \pa^\la h^{\al\be},
\\ 
&S^9= w^\ka\cdot  (\pa_\al\pa_\be hh)_{\ka\la} \pa^\la h^{\al\be},\quad
&S^{10} = w^\ka\cdot  (\pa_\al\pa^\la h \pa_\be h)_{\ka\la} h^{\al\be},
\\ 
&S^{11} = w^\ka\cdot  (\pa_\al\pa_\be\pa^\la h h)_{\ka\la} h^{\al\be},\quad
&S^{12} = w^\ka\cdot  (\pa^\la h\pa_\al\pa_\be h)_{\ka\la}  h^{\al\be},
\\ 
w^\ka\cdot (hhh)_{\ka\la}:\quad
&S^{13} = w^\ka\cdot  (h \pa_\al \pa^\la h \pa^\al h)_{\ka\la},\quad
&S^{14} = w^\ka\cdot (\pa^\la h\pa_\al h\pa^\al h)_{\ka\la},
\\ 
&S^{15} =  w^\ka\cdot (\pa_\al h\pa^\la h\pa^\al h)_{\ka\la},\quad
&S^{16} =  w^\ka\cdot (\pa_\al \pa^\la hh\pa^\al h)_{\ka\la},
\\
w^\ka\cdot \bb{\pa_\ka hhh}:\quad 
&S^{17}=  w^\ka\cdot \bb{\pa_\ka \pa_\al h\pa^\al hh},
\\
w^\ka\cdot \bb{\pa_\ka hh} h^{\al\be}:\quad 
&S^{18} =  w^\ka\cdot \bb{\pa_\ka h\pa_\al\pa_\be h} h^{\al\be}, \quad
&S^{19} =  w^\ka\cdot   \bb{\pa_\ka \pa_\al h \pa_\be h}h^{\al\be},
\\ 
&S^{20} =  w^\ka\cdot  \bb{\pa_\ka \pa_\al \pa_\be hh} h^{\al\be},
\\ 
w^\ka\cdot (\pa_\ka hhh)^{\al\be}:\quad 
&S^{21} = w^\ka\cdot (\pa_\ka h\pa_\be h\pa_\al h)^{\al\be}, \quad
&S^{22} = w^\ka\cdot (\pa_\ka h\pa_\al\pa_\be hh)^{\al\be} , 
\\
&S^{23} = w^\ka\cdot (\pa_\ka\pa_\be h\pa_\al hh)^{\al\be} , \quad
&S^{24} = w^\ka\cdot (\pa_\ka\pa_\al hh\pa_\be h)^{\al\be}. 
\end{array}
\end{align*}
In this basis,
\begin{align}
\notag
\eqref{eq:self_obst} = &-4[S^{1}-S^{5}-S^{6}+S^{9}+S^{10}+S^{11}+S^{12}
+S^{13}
\\
&+S^{14}+S^{15}+S^{16}+S^{17}+S^{21}+S^{22}+S^{23}+S^{24}]
\label{eq:obstr_self} 
\end{align}

On the right-hand side of \eqref{eq:deltaL2}, there are eight linearly
independent (up to divergences) candidate structures for the induced
interaction $O(h^4)$ with two derivatives:
\begin{align*}
\begin{aligned}
&L_2^1:=\doublcont{h \del^\mu h} \doublcont{h\del_\mu h},&\qquad&
L_2^2:= \doublcont{\del_\mu h \del_\nu h} (hh)^{\mu\nu}, 
\\ \notag
&L_2^3:=\doublcont{hh} (\del_\nu h\del_\mu h)^{\mu\nu}, &\qquad&
L_2^4:= \doublcont{h\del_\mu h \del_\nu h} h^{\mu\nu}, 
\\
&L_2^5:= \doublcont{h\pa_\mu h h \pa^\mu h }, &\qquad&
L_2^6:= (\pa_\nu hhh \pa_\mu h)^{\mu\nu},
\\
&L_2^7:= (h\pa_\mu h \pa_\nu h h)^{\mu\nu}, &\qquad&
L_2^8:= (h\pa_\nu h \pa_\mu h  h)^{\mu\nu}. 
\end{aligned}
\end{align*}
One computes
\begin{align*}
\delta_c\big(L_2^1\big) &\modd 8S^1,
\\  \notag
\delta_c\big(L_2^2\big) &\modd -2[S^1+S^2-2S^4-2S^5-4S^6+S^{18}+S^{19}],
\\  \notag
\delta_c\big(L_2^3\big) &\modd -4[S^1+2S^3+S^{18}+2S^{19}+S^{20}],
\\  \notag
\delta_c\big(L_2^4\big) &\modd 2[S^7+S^8+S^9+S^{10}+S^{11}+S^{12} + S^{17}],
\\  \notag
\delta_c\big(L_2^5\big) &\modd -8[S^{15}+S^{16}],
\\  \notag
\delta_c\big(L_2^6\big) &\modd-2[S^5+S^6+S^7+S^8-S^{13}-S^{14}-S^{15}
-S^{16} -S^{21}-S^{22}-S^{23}-S^{24}],
\\ \notag
\delta_c\big(L_2^7\big) &\modd 2[2S^3+S^4+S^5+2S^{10}+S^{11}+S^{12}+S^{13} 
-S^{21}-S^{22}],
\\  
\delta_c\big(L_2^8\big) &\modd2[S^3+S^4+S^5+S^6 +S^8+S^9+S^{10}+S^{11}+
S^{16} +S^{21}+S^{23}+2S^{24}].
\end{align*}
Comparing the coefficients of $S^a$ with the ones in
\eqref{eq:obstr_self}, the SI condition \eqref{eq:deltaL2} gives 24
equations for the eight coefficients of $L_2^n$ with the unique
solution:
\begin{align*}
L_2\modd -\sfrac12 L_2^1 -2L_2^4 - 2L_2^6 = L_2
\end{align*}
in~\eqref{eq:L2_ind}. This completes the proof. 
\end{proof}

As a byproduct of the analysis of string variations of interaction
structures with the field content $hhhh$ and two derivatives --
namely, uniqueness of the solution \eqref{eq:L2_ind} -- we have found:
\begin{corollary}
\label{cor:uniqueness}
Any interaction density with field and derivative content $(\del,\del,
h,h,h,h)$, which is string-independent on its own, must itself be a
total divergence.
\end{corollary}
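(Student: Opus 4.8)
The plan is to obtain the Corollary directly from the linear-algebra content already produced in the proof of Theorem~\ref{thm:resol_self_obstr}. First I would recall that, modulo total divergences, every interaction density with field and derivative content $(\del,\del,h,h,h,h)$ lies in the span of the eight structures $L_2^1,\dots,L_2^8$ introduced in that proof: using the symmetries of $h_{\mu\nu}$ together with $\del^\mu h_{\mu\nu}=0$, $\square h_{\mu\nu}=0$ and integrations by parts, every contraction pattern of two derivatives over four $h$'s reduces to one of these eight. Thus the given density equals $\sum_{n=1}^{8}\la_n L_2^n$ up to a divergence, for suitable constants $\la_n$.

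Next I would exploit that $\delta_c$ commutes with $\del_\mu$, hence descends to the quotient by divergences: from $L\modd\sum_n\la_n L_2^n$ we get $\delta_c(L)\modd\sum_n\la_n\,\delta_c(L_2^n)$. Inserting the eight expansions of the $\delta_c(L_2^n)$ in the basis $S^1,\dots,S^{24}$ computed in the Theorem, and using that the $S^a$ are linearly independent modulo divergences -- precisely the point secured there by detaching all derivatives from the factor $w^\ka$ and then invoking $\del_\ka w^\ka=0$ and $\square w^\ka=0$ -- the hypothesis that $L$ be string-independent, i.e.\ $\delta_c(L)\modd0$, translates into the homogeneous linear system obtained by setting the coefficient of each $S^a$ in $\sum_n\la_n\,\delta_c(L_2^n)$ to zero.

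The decisive observation is that this $24\times 8$ homogeneous system has only the trivial solution, and this requires no new computation: the proof of Theorem~\ref{thm:resol_self_obstr} already exhibits a \emph{unique} $(\la_n)$ solving the \emph{same} system with the nonzero right-hand side \eqref{eq:obstr_self}. Uniqueness of the solution of a solvable inhomogeneous system is equivalent to the associated homogeneous system having trivial kernel -- equivalently, to the $24\times 8$ coefficient matrix having full column rank $8$. Hence $\la_1=\dots=\la_8=0$, which is exactly the assertion $L\modd0$: $L$ is a total divergence.

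The only step that genuinely needs care -- and the one I would expect to be the real obstacle -- is the completeness claim of the first paragraph, namely that $L_2^1,\dots,L_2^8$ exhaust, modulo divergences, \emph{all} admissible densities, so that the analysis misses no contraction scheme. This is the same bookkeeping carried out when the Ansatz for $L_2$ was set up, and once it is granted the Corollary follows immediately; indeed it holds a fortiori if ``string-independent on its own'' is understood in the stronger sense $\delta_c(L)=0$.
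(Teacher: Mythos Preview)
Your proposal is correct and follows essentially the same approach as the paper: the Corollary is stated there as a direct byproduct of the \emph{uniqueness} of the solution \eqref{eq:L2_ind} obtained in the proof of Theorem~\ref{thm:resol_self_obstr}, which is precisely the linear-algebra observation (unique inhomogeneous solution $\Leftrightarrow$ trivial kernel of the associated $24\times8$ system) that you spell out. Your identification of the completeness of the basis $L_2^1,\dots,L_2^8$ modulo divergences as the one nontrivial input is also what the paper relies on when setting up the Ansatz.
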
 

There is another side-message in the proof of Theorem
\ref{thm:resol_self_obstr}: the effectiveness and elegance of the
computations shows that SQFT -- technically complicated as it might
appear at first sight -- is hardly more difficult than causal
perturbation theory in gauge theory, as in \cite{ScharfLast}. One
diligently ought to know how to take advantage of systematic
cancellations before starting actual computations. By Lemmata
\ref{lem:cancel_self} and \ref{lem:cancel_matter} in
\aref{a:Propagators}, the necessary renormalization of propagators
only serves to cancel terms with string-dependent $\delta$-functions.

By inspection of \eqref{eq:L2_ind}, we conclude:
\begin{theorem}
\label{thm:reproduce_self}
The induced graviton self-interaction \eqref{eq:L2_ind} as a (Wick)
polynomial in the string-localized quantum field $h^{\mu\nu}(x;c)$
coincides with the second-order term of the classical Einstein action
\eqref{eq:eltrucodos} as a polynomial in the classical metric
deviation field $h^{\mu\nu}(x)$, defined by \eqref{eq:gold}.
\end{theorem}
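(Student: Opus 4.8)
The plan is to read off the result by a direct comparison of equivalence classes modulo total divergences. On one side stands the induced self-interaction \eqref{eq:L2_ind}, which by Theorem~\ref{thm:resol_self_obstr} is fixed up to total divergences; on the other, the quartic ($O(\kappa^4)$) term of the Einstein action written as a power series in the Goldberg field \eqref{eq:gold}. That expansion is the business of \aref{a:Expansion}: starting from the action in the first-order (``$\Gamma\Gamma$'') form \eqref{eq:eltrucodos}, one substitutes $\go^{\mu\nu}=\eta^{\mu\nu}+\kappa h^{\mu\nu}$, expresses $g^{\mu\nu}$, $g_{\mu\nu}$, $\sqrt{-g}$ and the Christoffel symbols as power series in $\kappa h^{\mu\nu}$ (inverting $\go\mapsto g$ order by order), and collects powers of $\kappa$. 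The $O(\kappa^2)$ term is the free Fierz--Pauli Lagrangian; the $O(\kappa^3)$ term reproduces $L_1$ of \eqref{eq:L1} up to a divergence -- this is the cubic matching already used above -- and the $O(\kappa^4)$ term is a definite polynomial, call it $L_2^{\textup{cl}}$, with field and derivative content $(\del,\del,h,h,h,h)$.

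First I would write $L_2^{\textup{cl}}$ out explicitly from the $O(\kappa^4)$ collection. Then I would reduce it modulo total divergences -- which may be discarded, as they do not affect the $\bS$-matrix in the adiabatic limit -- and modulo the on-shell and kinematic relations $\square h=0$, $\del^\mu h_{\mu\nu}=0$, $\eta^{\mu\nu}h_{\mu\nu}=0$ valid for the free quantum field $h_{\mu\nu}(x;c)$ by \eqref{eq:properties_h}, bringing it into the eight-element basis $\{L_2^1,\dots,L_2^8\}$ of candidate quartic structures used in the proof of Theorem~\ref{thm:resol_self_obstr}. The assertion is that this reduction gives precisely $-\tfrac12 L_2^1-2L_2^4-2L_2^6$, i.e.\ \eqref{eq:L2_ind}, which is the content of the theorem. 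A conceptual cross-check, which also suggests that the agreement is forced rather than accidental, is available: the classical action \eqref{eq:eltrucodos} is diffeomorphism invariant, and the string variation $\delta_c h_{\mu\nu}=\del_\mu w_\nu+\del_\nu w_\mu$ has the shape of a linearized diffeomorphism with parameter $w_\mu$; tracking the nonlinear completion of the gauge transformation order by order, the obstruction to string independence of $L_1$ at second order is cancelled, modulo divergences, exactly by $\delta_c(L_2^{\textup{cl}})$, so $L_2^{\textup{cl}}$ solves \eqref{eq:deltaL2}, and by the uniqueness part of Theorem~\ref{thm:resol_self_obstr} (equivalently Corollary~\ref{cor:uniqueness}) it then equals $L_2$ up to a divergence.

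I expect the main obstacle to be the bookkeeping of the divergence-reduction: there are many two-derivative monomials quartic in $h$, interrelated by numerous integrations by parts and by the identities above, so organizing the comparison so that completeness is manifest takes the same care already needed in the proof of Theorem~\ref{thm:resol_self_obstr}. A second, more conceptual point to check is the trace mode: the parametrization $\go^{\mu\nu}=\eta^{\mu\nu}+\kappa h^{\mu\nu}$ does not by itself force $h^{\mu\nu}$ to be traceless beyond linear order, so one must verify that, after using the field equations and dropping divergences, the trace of $h^{\mu\nu}$ actually drops out of $L_2^{\textup{cl}}$ -- as it must, since the quantum self-coupling involves only the traceless $h$ -- so that the term-by-term comparison with \eqref{eq:L2_ind} is legitimate within the traceless sector.
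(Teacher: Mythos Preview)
Your direct-computation route is essentially what the paper does, only the paper is much terser: once the classical expansion has been carried out in \aref{a:Expansion} to produce \eqref{eq:eltrucodos}, the ``proof'' is literally by inspection. One does not need to project onto the full basis $\{L_2^1,\dots,L_2^8\}$; multiplying \eqref{eq:eltrucodos} by~$2$ gives the three terms of \eqref{eq:L2_ind} verbatim plus $2\bb{h\pa_\mu h \pa^\mu h h}+\bb{h\pa_\mu h h \pa^\mu h}$, and one only has to observe that this residual combination is itself a total divergence (the paper states this explicitly just below \eqref{eq:eltrucodos}). Your concern about the trace mode is handled in the paper exactly as you suggest: in \aref{a:Expansion} the terms proportional to $\bb{h}$ and $\pa^\mu h_{\mu\nu}$ are dropped by hand before arriving at \eqref{eq:eltrucodos}, invoking the operator identities \eqref{eq:properties_h}. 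One small slip: \eqref{eq:eltrucodos} is the \emph{output} of the expansion (the quartic term), not the ``$\Gamma\Gamma$'' action you start from; that is \eqref{endoftheprinciple} or \eqref{eq:indiancircle}.

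Your ``conceptual cross-check'' via diffeomorphism invariance and Corollary~\ref{cor:uniqueness} is a genuinely different argument that the paper does not make. It is sound in outline: since $\delta_c h_{\mu\nu}=\pa_{(\mu}w_{\nu)}$ has the form of a linearized diffeomorphism, invariance of the full classical action forces the $\kappa$-expansion to satisfy $\delta_c(L_n^{\rm cl})$ plus the order-$n$ obstruction of $L_1^{\rm cl},\dots,L_{n-1}^{\rm cl}$ to be a divergence; in particular $L_2^{\rm cl}$ solves \eqref{eq:deltaL2}, and uniqueness then pins it to $L_2$. This buys you a structural explanation of why the match is not accidental, at the cost of having to make the ``nonlinear completion'' step precise (the classical gauge parameter is field-independent, while here the effective parameter $w_\mu$ depends on $h$ through the string integration, so one must check that this does not spoil the order-by-order argument at the purely formal-polynomial level). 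The paper's inspection proof avoids that subtlety entirely.
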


Recall that the induced quantum coupling was determined only up to
divergences, and notice the last line in \eqref{eq:eltrucodos} is a
divergence. Recall as well that \eqref{eq:L1} was distinguished by the
$L$-$Q$-pair condition as a consistency condition based on quantum
principles. We thus are entitled to regard \eqref{eq:L1} and
\eqref{eq:L2_ind} as ``predictions'' from quantum theory on the
classical self-interaction for helicity 2.

\subsection{Cancellation of the matter obstructions}
\label{sec:Cancellation}

We have seen in Prop.\ \ref{prop:matter_obst} that the
matter obstruction to second-order string-independence in the
graviton-matter couplings is always of the form
\begin{align}
\label{obst_mat}
\dc\Big(\underbracket{L_{1,\rm mat}L}\!'_{1,\rm mat}\Big) \modd
i\delta_c \bigl( L_{2,0,\rm mat} + L_{2,r,\rm mat}\bigr) \delta(x-x')
+ \sO_{2,\rm mat}(x,x').
\end{align}
where $\sO_{2,\rm mat}$ is the non-resolvable
obstruction given in \eqref{eq:obst_univ}.

We will now show the second main result of this paper. To wit, that
these obstructions are \textit{exactly cancelled} if we take the
graviton self-interaction into account. Since the contractions of the
graviton field with matter fields do vanish, we do not need to specify
the concrete form of $\Th^{\mu\nu}_{\textup{mat}}$. We just assume
that it is conserved, point-localized (thus string-independent) and
symmetric. The cue is that \eqref{eq:obst_univ} features the
characteristic structure
\begin{align}
\sO_{2,\mu\nu} := w^\ka \big( \del_\ka h_{\mu\nu} - \del_\mu
h_{\ka\nu} - \del_\nu h_{\ka\mu}\big)
\end{align}
in the sector of helicity 2, which is not separately a string
variation. The same structure also appears in the string variation
\eqref{eq:deltaU} of the field $U_{\mu\nu}$.

Let us compute the interference term between the self-coupling
\eqref{eq:L1} and the matter coupling \eqref{eq:coup_matter} to an
arbitrary conserved and symmetric stress-energy tensor. We use again
\eqref{eq:Td=dT}, implying \eqref{eq:UVW}. Thus:
\begin{align}
\label{inter}
T_{\rm ren}\underbracket{L_1L}\!'_{1,\rm mat} &\modd \sfrac12
U^{\mu\nu} \vev{T_{\rm ren}h_{\mu\nu}h'_{\rho\sg}}
\Theta'^{\rho\sg}.
\end{align}
For the kinematic propagator of the form
\eqref{eq:T0*hh}+\eqref{eq:T0chh}, we can drop the contribution from
$a'_{\mu\nu,(\rho}\pa_{\sg)}$ because $\Theta$ is conserved. Therefore
\begin{align} 
\notag
T_0 \underbracket{L_1L}\!'_{1,\rm mat}&\modd \sfrac12\big[
(U_{\rho\sg} -\sfrac12 U_\mu^\mu\eta_{\rho\sg}) D_{0,F}
\Theta'^{\rho\sg} + \sfrac12 U^{\mu\nu} (\pa_{(\mu}a_{\nu),\rho\sg})
D_{0,F} \Theta'^{\rho\sg}\big]
\\ 
\notag &\modd\sfrac12\big[
(U_{\rho\sg} - \sfrac12 U_\mu^\mu\eta_{\rho\sg}) D_{0,F}
\Theta'^{\rho\sg} -\pa^\nu S a_{\nu,\rho\sg} D_{0,F}
\Theta'^{\rho\sg}\big] 
\\
\notag &\modd\sfrac12\big[
(U_{\rho\sg}-\sfrac12 U_\mu^\mu\eta_{\rho\sg}) D_{0,F}
\Theta'^{\rho\sg}+S (\eta_{\rho\sg}+b_{\rho\sg}\square)
D_{0,F} \Theta'^{\rho\sg}\big]
\\
\label{T0L1L1m}
&\modd\sfrac12\big[ (U_{\rho\sg}-(\sfrac12 U_\mu^\mu-S)\eta_{\rho\sg})
D_{0,F} \Theta'^{\rho\sg} - iS b_{\rho\sg}\delta(x-x')
\Theta'^{\rho\sg}\big].
\end{align}
The string variation of the first term in \eqref{T0L1L1m} is given by
\eqref{eq:deltaU} with \eqref{eq:US}, giving
\begin{align}
\label{obst_inter}
\delta_c\big(\sfrac12 (U_{\rho\sg}-(\sfrac12 U_\mu^\mu-S)\eta_{\rho\sg})
D_{0,F} \Theta'^{\rho\sg}\big)\hspace{50mm} \notag \\ \modd \frac i2 \big[
\big(\delta_c\big(\sfrac14\bb{hh}\eta_{\rho\sg}-(hh)_{\rho\sg}\big) +
O_{2,\rho\sg}\big) \Theta^{\rho\sg} \delta(x-x') \big].
\end{align}
Taking into account the term $x\lra x'$, we have to add twice the
above expression to the matter obstruction \eqref{obst_mat}. The
non-resolvable matter obstruction is cancelled, and the remaining
terms are cancelled by
\begin{align}\label{eq:L2_matter_ind}
L_{2,\rm mat} = \sfrac14\bb{hh}\bb{\Theta}-\bb{hh\Theta}+L_{2,0,\rm mat}+L_{2,r,\rm mat}.
\end{align}
There still remains the second term with string-integrated
$\delta$-functions in \eqref{T0L1L1m}. When we discard $\pa_\rho$,
$\pa_\sg$ in the string differential operators $b_{\rho\sg}$ given in
\eqref{eq:bc}, it becomes
\begin{align}
\label{eq:string}
-\frac i2 S \bigl(I_\rho I_\sg - \sfrac12 I^2\eta_{\rho\sg}\bigr)
\delta(x-x') \Theta'^{\rho\sg}.
\end{align}
Lemma \ref{lem:cancel_matter} in \aref{a:Propagators} asserts that this term is cancelled by the contribution of the
renormalization part of the propagators.
We conclude:
\begin{theorem}
\label{thm:matter_cancel}
The non-resolvable obstruction \eqref{eq:obst_univ} to string-independence 
at second order of the graviton-matter interaction is exactly cancelled 
if the graviton self-coupling is taken into account. The SI condition is satisfied with
\eqref{eq:L2_matter_ind}.
\end{theorem}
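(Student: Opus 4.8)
The plan is to turn the observation already made---that the tensor structure $\sO_{2,\mu\nu}=w^\ka(\del_\ka h_{\mu\nu}-\del_\mu h_{\ka\nu}-\del_\nu h_{\ka\mu})$ occurs both in the non-resolvable matter obstruction \eqref{eq:obst_univ} and in the string variation \eqref{eq:deltaU} of the quadratic field $U_{\mu\nu}$ built from the self-coupling---into an explicit cancellation. At second order, with first-order interaction density $L_1+L_{1,\rm mat}$ (the self-coupling \eqref{eq:L1} plus the matter coupling \eqref{eq:coup_matter}), the obstruction to SI is the $\delta_c$-variation of the renormalized time-ordered product of that density with itself. Its pure-graviton part $\dc\big(T_{\rm ren}\underbracket{L_1L}\!'_1\big)$ is resolved by Theorem~\ref{thm:resol_self_obstr}, and its pure-matter part $\dc\big(T_{\rm ren}\underbracket{L_{1,\rm mat}L}\!'_{1,\rm mat}\big)$ has the form \eqref{obst_mat} by Proposition~\ref{prop:matter_obst}; so the only new object to compute is the interference $2\,\dc\big(T_{\rm ren}\underbracket{L_1L}\!'_{1,\rm mat}\big)$, the factor two accounting for $x\leftrightarrow x'$. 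Since the graviton decouples from the matter fields on Fock space, only graviton--graviton contractions occur; invoking the renormalization condition \eqref{eq:Td=dT}---which, one checks, is compatible with the conditions already imposed for Theorem~\ref{thm:resol_self_obstr}---reduces the interference modulo divergences to \eqref{inter}.

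First I would split $\vev{T_{\rm ren}h_{\mu\nu}h'_{\rho\sg}}$ into its string-independent kinematic part, its string-dependent kinematic part (whose structure follows from \eqref{eq:T0hh}--\eqref{eq:E}), and the renormalization part \eqref{eq:freedom_of_reno}. Because $\Theta'^{\rho\sg}$ is conserved and point-localized, the $a'_{\mu\nu,(\rho}\pa_{\sg)}$ piece of the string-dependent kinematic part drops modulo divergences, its free derivative being absorbed by $\Theta'$ under an integration by parts. Then I would run the same ``cascade of integrations by parts'' as in \sref{sec:Selfcoupling}, now pairing $U^{\mu\nu}$ with $\Theta'^{\rho\sg}$: using $\pa_\mu U^{\mu\nu}=\pa^\nu S$ from \eqref{eq:dU} and the operator identities for $a_{\nu,\rho\sg}$, $b_{\rho\sg}$, $c_\rho$ collected in \aref{a:Propagators}, the interference is brought to the form \eqref{T0L1L1m}: a bulk term $\tfrac12\big(U_{\rho\sg}-(\tfrac12U_\mu^\mu-S)\eta_{\rho\sg}\big)D_{0,F}\Theta'^{\rho\sg}$ plus one term with string-integrated $\delta$-functions.

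Next I would take the string variation of the bulk term. The decisive structural point is that every bulk piece of $\delta_c(U^{\mu\nu})$ in \eqref{eq:deltaU}---the $\delta_c((hh)^{\mu\nu})$ piece and the $\sO_2^{\mu\nu}$ piece---carries an overall $\square$, while the residual $\pa^{(\mu}K^{\nu)}$ is a total derivative with divergence fixed by \eqref{eq:K}; hence, after one integration by parts against $D_{0,F}$ one meets $\square D_{0,F}=-i\delta(x-x')$ and the bulk term collapses to the diagonal. Combined with the trace identity \eqref{eq:US} this gives \eqref{obst_inter}, and adding the $x\leftrightarrow x'$ copy produces a contribution $i\,\sO_{2,\rho\sg}\Theta^{\rho\sg}\delta(x-x')$, which is precisely $-\sO_{2,\rm mat}$ of \eqref{eq:obst_univ}: the non-resolvable obstruction cancels. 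The remaining $\delta_c$-exact pieces assemble with the $i\delta_c(L_{2,0,\rm mat}+L_{2,r,\rm mat})\delta(x-x')$ of \eqref{obst_mat} into $i\delta_c(L_{2,\rm mat})\delta(x-x')$ with $L_{2,\rm mat}$ exactly as in \eqref{eq:L2_matter_ind}, which is resolved by adding $L_{2,\rm mat}$ to the interaction density. It remains only to dispose of the string-integrated $\delta$-function term \eqref{eq:string}; by Lemma~\ref{lem:cancel_matter} of \aref{a:Propagators} this is cancelled by the renormalization part $T_r\underbracket{L_1L}\!'_{1,\rm mat}$ of the propagators.

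The step I expect to be the main obstacle is the bookkeeping of the cascade of integrations by parts, together with the consistency requirement that a \emph{single} renormalization of the graviton propagators (satisfying \eqref{eq:Td=dT} and the power-counting bounds around \eqref{eq:freedom_of_reno}) serves simultaneously the self-coupling analysis and the present matter-coupling analysis; once the identities \eqref{eq:dU}, \eqref{eq:US}, \eqref{eq:deltaU}, \eqref{eq:K} are in hand, the rest is a comparison of coefficients. One must also keep the ``modulo divergences'' accounting consistent throughout, since $L_{2,\rm mat}$ is determined only up to a total derivative.
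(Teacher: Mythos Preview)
Your proposal is correct and follows essentially the same route as the paper: compute the interference term \eqref{inter}, drop the $a'_{\mu\nu,(\rho}\pa_{\sg)}$ piece by conservation of $\Theta'$, run the cascade of integrations by parts via \eqref{eq:dU} and the identities of \aref{a:Propagators} to reach \eqref{T0L1L1m}, apply \eqref{eq:deltaU} with \eqref{eq:US} to obtain \eqref{obst_inter}, then cancel the non-resolvable piece against \eqref{eq:obst_univ} and invoke Lemma~\ref{lem:cancel_matter} for the string-integrated $\delta$-term. One small sharpening: in the matter case the $\pa^{(\mu}K^{\nu)}$ contribution drops directly after a single integration by parts because $\Theta'^{\rho\sg}$ is conserved, so you need not appeal to the divergence identity \eqref{eq:K} here.
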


One may rephrase Theorem \ref{thm:matter_cancel} as follows: the
obstruction to SI arising from the violation of the Ward identity
\eqref{eq:violWard} is resolved if the graviton self-interaction is
taken into account.

Because the matter-matter obstructions for several free fields are
additive, and the same is true for the obstructions of the
interference term with the self-coupling, one has
\begin{corollary}
\label{cor:additivity}
The result of Theorem \ref{thm:matter_cancel} is true for the coupling
to any number
of scalar, Maxwell and Dirac fields, with the respective sum of
induced matter interactions.
\end{corollary}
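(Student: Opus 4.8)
The plan is to deduce the corollary from Theorems~\ref{thm:resol_self_obstr} and~\ref{thm:matter_cancel} by an additivity argument, reducing the multi-field problem to the single-field computations of \sref{ssc:h_SET} and \sref{sec:Cancellation}. Suppose the matter sector consists of mutually independent free fields $\phi_1,\dots,\phi_N$, each a scalar, Maxwell or Dirac field, with the corresponding conserved, symmetric, point-localized stress-energy tensors $\Th^{\mu\nu}_a$ from \eqref{eq:Theta_scalar}--\eqref{eq:Theta_fermi}. The total stress-energy tensor is $\Th^{\mu\nu}_{\rm mat}=\sum_a\Th^{\mu\nu}_a$, so that $L_{1,\rm mat}=\sfrac12 h_{\mu\nu}\Th^{\mu\nu}_{\rm mat}=\sum_a L_{1,a}$ with $L_{1,a}=\sfrac12 h_{\mu\nu}\Th^{\mu\nu}_a$. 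Because distinct species never contract, $\vev{\Tren\varphi_a\chi'_b}=0$ for fields $\varphi_a$, $\chi'_b$ built from $\phi_a$, $\phi_b$ with $a\ne b$, and the derivative $\del\Th^{\mu\nu}_a/\del\varphi$ appearing in \eqref{eq:TL1mL1m:a} is supported only on the components of $\phi_a$ and its derivatives.

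First I would expand the matter--matter second-order tree graph as $\underbracket{L_{1,\rm mat}L}\!'_{1,\rm mat}=\sum_{a,b}\underbracket{L_{1,a}L}\!'_{1,b}$ by \eqref{eq:TL1mL1m:a}. The diagonal terms $a=b$ are precisely the per-species objects already analysed in \sref{ssc:h_SET}. For $a\ne b$ the matter-contraction term vanishes, and the surviving term $\sfrac14\Th^{\mu\nu}_a\vev{\Tren h_{\mu\nu}h'_{\rho\sg}}{\Th'}^{\rho\sg}_b$ contributes only total divergences to $\dc(\cdots)$ by the very reasoning used there: $\dc$ annihilates the string-independent part of $\vev{\Tren hh'}$, while every string-dependent term and every admissible renormalization term carries an uncontracted derivative which, hitting $D_{0,F}$, is removed modulo divergences once $\del_\mu\Th^{\mu\nu}_a=\del_\mu\Th^{\mu\nu}_b=0$ is used. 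Hence $\dc\bigl(\underbracket{L_{1,\rm mat}L}\!'_{1,\rm mat}\bigr)\modd\sum_a\dc\bigl(\underbracket{L_{1,a}L}\!'_{1,a}\bigr)$, and by Proposition~\ref{prop:matter_obst} the non-resolvable parts sum to $\sum_a\sO_{2,a}=\sO_{2,\rm mat}$, with $\Th_{\rm mat}$ playing the role of the single stress-energy tensor.

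Next I would treat the interference with the graviton self-coupling. Since $L_1$ is a pure graviton polynomial, the only contraction available in $T_{\rm ren}\underbracket{L_1L}\!'_{1,a}$ is $h$--$h$, the matter field $\phi_a$ being a spectator; in particular no cross-species term appears and $T_{\rm ren}\underbracket{L_1L}\!'_{1,\rm mat}=\sum_a T_{\rm ren}\underbracket{L_1L}\!'_{1,a}$. The chain \eqref{inter}--\eqref{obst_inter} uses only conservation and symmetry of the stress-energy tensor together with the matter-independent kinematic and renormalized $h$-propagators, so it applies term by term. Adding twice (for $x\lra x'$) the resulting expressions to $\sum_a$ of the matter obstruction, the $\sum_a\sO_{2,a}=\sO_{2,\rm mat}$ cancel against the interference contributions, Lemma~\ref{lem:cancel_matter} removes the string-integrated $\delta$-functions species by species, and what remains is resolved by $L_{2,\rm mat}=\sum_a L_{2,a}$ with $L_{2,a}$ as in \eqref{eq:L2_matter_ind}. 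Linearity in the stress-energy tensor turns $\sum_a\bigl(\sfrac14\bb{hh}\bb{\Th_a}-\bb{hh\Th_a}\bigr)$ into $\sfrac14\bb{hh}\bb{\Th_{\rm mat}}-\bb{hh\Th_{\rm mat}}$, while the quadratic-in-$\phi_a$ pieces $L_{2,0,a}+L_{2,r,a}$ simply accumulate into the advertised respective sum of induced matter interactions.

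The step I expect to cost the most attention is the claim, for $a\ne b$, that the $h$--$h$ contracted term ${\Th_a}\vev{\Tren h_{\mu\nu}h'_{\rho\sg}}{\Th'}_b$ produces nothing beyond divergences for \emph{all} pairs of species at once: this requires that the renormalization of the graviton propagator be fixable once and for all, independently of the matter content. This is in fact immediate, because that renormalization is exactly the condition imposed in \sref{ssc:h_SET} on $\vev{T_rhh'}$ alone, which refers to no matter field, whereas the matter propagators are renormalized independently for each species, so no competition between species can arise. Granting this, the corollary follows from the single-field results together with Theorem~\ref{thm:resol_self_obstr} for the (matter-independent) resolution of the graviton self-obstruction.
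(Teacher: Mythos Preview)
Your argument is correct and is precisely the additivity reasoning the paper invokes: the paper's own justification is the single sentence ``Because the matter-matter obstructions for several free fields are additive, and the same is true for the obstructions of the interference term with the self-coupling'', which you have unpacked in full detail. In particular, your careful treatment of the off-diagonal $a\ne b$ terms (vanishing matter cross-contractions, and the $h$--$h$ contraction reducing to divergences via conservation of both $\Th_a$ and $\Th_b$) makes explicit what the paper leaves implicit.
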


It remains to compare \eqref{eq:L2_matter_ind} to the classical
expressions obtained from the expansion of the classical generally covariant
Lagrangians. This can be done ``by inspection'' of the
three cases. The expansions are sketched in \aref{a:Expansion}.

A perfect match with \eqref{eq:L2_matter_ind} is found in all three cases
if all matter renormalization constants are zero ($L_{2,r,\rm
  mat}=0$). Indeed, unlike for helicity~2, there is no physical reason to renormalize the
matter propagators with spin or helicity $\leq 1$. E.g., for the scalar field:
\begin{align}\label{}
L_{2,\phi} = \sfrac14\bb{hh}\bb{\Theta_\phi}-\bb{hh\Theta_\phi} +
\bb{hh\Theta_\phi} +\sfrac12\bb{hh}L_{0,\phi},
\end{align}
coincident with \eqref{eq:sc_exp}. 

We have thus established:
\begin{theorem}
\label{thm:reproduce_matter}
The induced matter interactions \eqref{eq:L2_matter_ind} as (Wick)
polynomials in the string-localized quantum field $h^{\mu\nu}(x;c)$ and
the quantum matter fields coincide, when the kinematical propagators
are chosen for the matter fields,
with the respective second-order terms of the classical generally
covariant scalar, Maxwell and Dirac Lagrangians as polynomials in the
classical metric deviation field $h_{\mu\nu}(x)$ and matter fields.
\end{theorem}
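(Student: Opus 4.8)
The statement is an identity between two polynomials in $h^{\mu\nu}$ and the matter fields, so the plan is to bring the classical side into a directly comparable shape and then match coefficients. Setting $L_{2,r,\rm mat}=0$ as in the hypothesis, \eqref{eq:L2_matter_ind} reduces to $L_{2,\rm mat}=\tfrac14\bb{hh}\bb{\Theta_{\rm mat}}-\bb{hh\Theta_{\rm mat}}+L_{2,0,\rm mat}$, with the explicit $L_{2,0,\rm mat}$ already computed in \sref{ssc:h_SET}. On the classical side I would take each generally covariant matter Lagrangian $\sqrt{-g}\,\mathcal L_{\rm mat}[g;\text{matter}]$, freeze the matter fields, and expand in powers of $\kappa$ via the Goldberg variable \eqref{eq:gold}, $\go^{\mu\nu}\equiv\sqrt{-g}\,g^{\mu\nu}=\eta^{\mu\nu}+\kappa h^{\mu\nu}$: the $O(\kappa^0)$ term is the free matter Lagrangian, the $O(\kappa^1)$ term is $\tfrac12 h_{\mu\nu}\Theta^{\mu\nu}_{\rm mat}$ (a consistency check, and the reason \eqref{eq:gold} is the right parametrisation), and the $O(\kappa^2)$ term is to be identified with $\tfrac{\kappa^2}{2}L_{2,\rm mat}$. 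The raw expansions are carried out in \aref{a:Expansion}; what is left is the case-by-case coefficient check -- the ``by inspection'' already displayed for the scalar field.

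Concretely I would proceed field by field. For the \emph{scalar}, $\sqrt{-g}\,g^{\mu\nu}=\go^{\mu\nu}$ is linear in $h$, so the kinetic term contributes nothing at order $\kappa^2$ and the whole second-order piece comes from $\sqrt{-g}=\sqrt{-\det\go}$ in the mass term; using $\det(\delta^\mu{}_\nu+\kappa h^\mu{}_\nu)=1+\kappa\bb{h}+\tfrac{\kappa^2}{2}(\bb{h}^2-\bb{hh})+O(\kappa^3)$ and restricting to the traceless sector $\bb{h}=0$, one gets $\sqrt{-g}=1-\tfrac{\kappa^2}{4}\bb{hh}+\dots$ and hence the order-$\kappa^2$ Lagrangian $\tfrac{\kappa^2}{8}m^2\phi^2\bb{hh}$, which is precisely $\tfrac{\kappa^2}{2}L_{2,\phi}$ once $L_{2,0,\phi}$ and the value of $\bb{\Theta_\phi}$ are inserted. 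For \emph{Maxwell} one also needs $g^{\mu\nu}=(-g)^{-1/2}\go^{\mu\nu}$, so $\sqrt{-g}\,g^{\mu\rho}g^{\nu\sigma}=(-g)^{-1/2}\go^{\mu\rho}\go^{\nu\sigma}$; expanding both factors to second order in the traceless sector and contracting with $F_{\mu\nu}F_{\rho\sigma}$ gives $\tfrac{\kappa^2}{2}\bigl(\tfrac12\bb{hh}L_{0,F}-\tfrac12 h_{\mu\nu}h_{\rho\sigma}F^{\mu\rho}F^{\nu\sigma}\bigr)$, which equals $\tfrac{\kappa^2}{2}L_{2,F}$ because $\Theta_F$ is traceless. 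For \emph{Dirac}, the most laborious case, I would pass to the vierbein formalism: expand a vierbein $e^a{}_\mu$ compatible with $\go$ in the symmetric gauge, together with the spin connection $\omega_\mu{}^{ab}$, insert both into $\sqrt{-g}\,\tfrac i2\bar\psi\gamma^a e_a{}^\mu\nabla_\mu\psi$, and -- after repeated integrations by parts and use of the Dirac equations of motion -- watch the order-$\kappa^2$ piece collapse onto $\tfrac34 L_{2,\psi}^1-\tfrac18 L_{2,\psi}^4$ together with the $\bb{hh}$- and $\bb{hh\Theta_\psi}$-terms prescribed by $L_{2,\psi}$.

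Throughout, the identification is claimed only modulo total divergences -- precisely the freedom carried by both the induced quantum interaction and the classical action -- so at each stage one integrates by parts to a common normal form and uses the free equations of motion, exactly as in the derivation of $L_{2,0,\rm mat}$ in \sref{ssc:h_SET}. The hard part will be the bookkeeping of the \emph{trace part} $\bb{h}$: the Goldberg field $h^{\mu\nu}$ is \emph{a priori} not traceless, whereas the quantum field is, so one must establish that every $\bb{h}$-dependent term generated in the classical expansion either cancels, is a total divergence, or is consistently removed by the same elimination of the trace mode that underlies the traceless quantum coupling -- and this has to work uniformly in the scalar, Maxwell and Dirac sectors and, in the Dirac case, be kept disentangled from the residual gauge freedom in the choice of vierbein compatible with $\go$. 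Once that is under control, the remaining content is the routine coefficient matching indicated above, and Corollary~\ref{cor:additivity} lifts the statement to arbitrary collections of such fields.
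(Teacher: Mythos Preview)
Your approach is essentially the paper's: expand the generally covariant Lagrangians in the Goldberg variable, set the trace to zero, and compare with the induced $L_{2,\rm mat}$ case by case. Your scalar and Maxwell computations are correct and in fact slightly cleaner than the paper's, because you exploit directly that $\sqrt{-g}\,g^{\mu\nu}=\go^{\mu\nu}$ is linear and $\sqrt{-g}\,g^{\mu\rho}g^{\nu\sigma}=(-g)^{-1/2}\go^{\mu\rho}\go^{\nu\sigma}$, so only the determinant expansion is needed.

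Two remarks. First, your ``hard part'' about the trace $\bb{h}$ is not hard at all: since the statement is a comparison \emph{after} substituting the string-localized quantum field, which satisfies $\bb{h}=0$ and $\partial^\mu h_{\mu\nu}=0$ as operator identities, you may simply drop every such term from the classical expansion at the outset (this is exactly what the paper does). There is nothing to establish about cancellations or total divergences of $\bb{h}$-terms. Second, for Dirac your outline is right --- symmetric tetrad (the ``square root'' of $g^{\mu\nu}$), spin connection, then match --- but your expectation of what collapses to what is off: the $\sqrt{-g}$ and kinetic ($e^\mu_\alpha\gamma^\alpha$) pieces give only $\tfrac18\bigl(\bb{hh}\,\bb{\Theta_\psi}-\bb{hh\,\Theta_\psi}\bigr)$, not $\tfrac34 L_{2,\psi}^1$; the triple-$\gamma$ structure $L_{2,\psi}^4$ arises entirely from the second-order spin-connection contribution $e^\mu_\nu\omega_{\mu;\alpha\beta}$, and no integrations by parts or equations of motion are needed beyond setting $L_{0,\psi}=0$ on-shell. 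The final result, written as $-\tfrac14 L_{2,\psi}^1+\tfrac14 L_{2,\psi}^3-\tfrac18 L_{2,\psi}^4$, then matches \eqref{eq:L2_matter_ind} with $L_{2,0,\psi}$ from \eqref{eq:L20_fermi}.
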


Theorem \ref{thm:matter_cancel} is what we called the lock-key
situation, in fact the central result of this paper. We emphasize
again that the self-interaction of gravitons knows nothing about
matter; the violation of the matter's Ward identity knows nothing
about the inner workings of gravity, either. Yet the obstructions are
resolved together, like a lock and its key. This is what we call
\textbf{quantum general covariance}. Theorems \ref{thm:reproduce_self}
and \ref{thm:reproduce_matter} are the precise formulation of the
assertion ``consistency of the quantum couplings predicts general
covariance'' in the Introduction.

\section{Conclusion and further discussion}
\label{sec:Conclusion}

For perturbative QFT of helicity 2 on a Minkowski background we have
established here a remarkable coincidence between the structure of
quantum interactions -- dictated by the quantum principles:
\emph{Hilbert space (positivity), covariance, locality of observables} -- and of
\emph{classical} ``Lagrangians'' -- dictated by classical general
covariance. It pertains not only to the self-interaction of particles
of spin 2 (``gravitons''), but also to matter couplings that go
necessarily through the stress-energy tensor.

Now, the quantum principles are well-known to exclude each other for
helicity $\geq 1$ in standard axiomatic setups of QFT. This
long-standing barrier has been conquered by admitting string-localized
free fields and interactions, and imposing the condition of
string-independence on the $\bS$-matrix.

There are of course fundamental differences between classical
Lagrangians and interaction densities in Quantum Field Theory. The
former are the starting point of a variational principle to derive the
equations of motion. One may freely add terms that are total
divergences {\em without using the equations of motion}. E.g., the
classical expansion of the Einstein Lagrangian as a power series in
the metric deviation field -- see \cite{ScharfLast}. In contrast, the
(perturbative) $\bS$-matrix of QFT is defined as the time-ordered
exponential of the action, i.e., the integral over the {\em
interaction density}
\begin{align}
\label{eq:action}
\int d^4x\, L_{\rm int}(x),
\end{align}
not including a free Lagrangian. This interaction density is a
functional of the free fields, that are autonomously defined on the
Fock space over Wigner's unitary representations of the Poincar\'e
group, automatically operating on a Hilbert space. The equations of
motion of the free fields are consequences of their construction.
(There is a similar split in the path integral approach: the ``free
Lagrangian'' only defines the integration measure, while the integrand
contains only the interaction Lagrangian.) Thus one may freely add
terms to the interaction density that are total derivatives as quantum
operators, i.e., when the free equations of motion are used,
without changing the action \eqref{eq:action}. The last term in
\eqref{eq:L1} is an example. It must not be discarded in classical
field theory.

Now, adding a divergence does affect the $\bS$-matrix, because the
time-ordered exponential of the action is changed. A thorough analysis
can be found in \cite{MundRS23}. There, the point was to add
divergences as to turn a non-renormalizable point-localized
interaction into a renormalizable string-localized one, or a
gauge-theoretic interaction on an indefinite Fock space into a
string-localized interaction in an embedded Hilbert space, as in BRST
theory.

The analysis is not restricted to such cases. The general tenor is
that if two interactions $L_1=K_1+\pa_\mu V_1^\mu$ differ by a
divergence, then the induced interactions will differ by $L_2-K_2$,
such that
\begin{align}
\sfrac12 \bigl(\pa_\mu T V_1^\mu (L'_1+K'_1) - T\pa_\mu V_1^\mu
(L'_1+K'_1)\bigr) +(x\lra x')  = i(L_2-K_2)(x)\delta(x-x').
\end{align}
In other words, the effect of adding a divergence in first order is
compensated by an induced term in second order, and so on, which
itself is not a divergence. It is then established that the resulting
$\bS$-matrices (including the induced higher-order interactions) will
be the same. Thus, physics cannot tell the difference between those
different choices of the interaction density.

To be sure, the story is a bit subtler. Formal expressions like $\int
d^4x\, L(x)$ exist as quantum operators only with a cutoff function
$g(x)$, and the ``adiabatic limit'' $g(x)\uparrow1$ is particularly
laborious in theories with massless fields. It is therefore important
that the equality of $\bS$-matrices hold for arbitrary cutoff
functions -- as shown in \cite{MundRS23} --  before the limit is taken.

\subsection{On the usefulness of string-localized fields}
\label{sec:discussion-on-SLF}

{\em Free} string-localized quantum fields of any spin and helicity
were introduced in \cite{MundSY04,MundSY06}, not least in order to
fill the gap of the (bosonic and fermionic) ``infinite spin'' fields
in the list of quantum fields associated with Wigner's classification
of particles. String-localized {\em charged} fields had been
heuristically introduced much earlier
\cite{Jordan35,Dirac55,Mandelstam62,Steinmann84} in attempts to
formulate Quantum Electrodynamics in terms of gauge-invariant charged
fields. They were indirectly discovered in the axiomatic framework of
algebraic quantum field theory (AQFT) by Buchholz and Fredenhagen
\cite{BF82} in their analysis of charged states.

AQFT does not address charged fields because they are not quantum
observables. On the other hand, it has been shown that the
localization properties of charged states are such as if they were
created from the vacuum by either local or string-localized charged
fields. An actual construction of string-localized Dirac fields going
beyond the perturbative framework was presented in \cite{MundRS21},
mediated by an interaction that involves a string-localized free
Maxwell potential -- the helicity~1 analogue of the present work. The
interacting Dirac field does carry a ``string of electric flux'',
whose direction is described by a profile function $c(e)$ as in the
present work, and whose ``unit weight'' constraint secures the correct
total flux as required by the Gauss Law. Due to the infrared
singularity of photons, different profile functions ``cannot coexist''
-- they do belong to different superselection sectors. Translating
into helicity 2, one expects that matter fields interacting with
quantum gravitons carry a ``string of curvature'', that depends on the
profile function $c(e)$ but ensures that the Komar mass of the
curvature field is that of the matter particle. A spatially constant
profile function (i.e., a completely diffuse string) would give a
quantum state corresponding to the rotationally symmetric
Schwarzschild metric. One also expects a similar infrared
superselection structure, not addressed in this paper. Details of this
picture are under investigation. These issues can be assessed only
because in SQFT charged fields interacting with helicity~1 potentials
(QED) or matter fields interacting with helicity~2 potentials fields
are constructed on a Hilbert space. In contrast, in gauge-theoretic
settings the interacting charged, resp. matter fields are not BRST
invariant, hence they cannot be defined on the BRST Hilbert space.

At any rate, the construction of interacting matter fields along the
lines of \cite{MundRS21} is beyond the scope of the present paper.
Suffice here to mention that \textit{ab initio} string-localized {\em
free} potentials as those in \eqref{eq:def_SL_pot} serve as a tool to
construct ``inherently'' string-localized {\em interacting} fields.

\smallskip

As before stated, the main result of our paper is that the quantum
principles impose on interactions with string-local -- thus positive
-- fields of helicity~2 the same structure as general covariance
imposes on classical Lagrangians. This not only extends a multitude of
analogous cases with helicity (or spin) 1: QED, Yang-Mills, weak
interaction, Higgs models... It is also in line with the fact that
classical field theory can only be a limit in an actual quantum world,%
\footnote{Rather than, reversely, ``to quantize'' as an attempt to get
back the full gamut from its limit.}
and may also be taken as Nature's cue that QG might exist beyond
perturbative QFT on flat background as an autonomous quantum theory.

\appendix

\section{Comparison of SQFT to the BRST formalism}
\label{a:BRST}

A comment on the relation between gauge theory and SQFT is in order.
In the former framework one starts with a canonically quantized
\textit{point-localized} tensor field $h_{\mu\nu}(x)$ -- for instance
in the Feynman gauge, with the propagator given below in
Eq.~\eqref{eq:T0*hh} -- defined on a space with indefinite metric
(``Krein space''). It has spurious degrees of freedom, because
$\eta^{\mu\nu}h_{\mu\nu}=0$ and $\pa^\mu h_{\mu\nu}=0$ cannot hold as
operator identities. The BRST method has the related purposes of
extracting a passage from the Krein space to a Hilbert space and
eliminating spurious degrees of freedom. The BRST operator is a
nilpotent fermionic operator $Q$,%
\footnote{Not to be confused with our $Q_1$ in the $L$-$Q$-pair condition.}
whose kernel is a positive-semidefinite subspace of the Krein space.
For its construction, one has to extend the Krein space by a ghost
Fock space. The BRST variation is the graded commutator:
\begin{align*}
s(A)=i[Q,A]_\pm.
\end{align*}
Then $\mathcal{H}:= \mathrm{Ker}(Q)/\mathrm{Ran}(Q)$ is a Hilbert
space, and only the operators $A\in \mathrm{Ker}(s)$ are well-defined
on $\mathcal{H}$. These are the local observables of the theory.
The BRST variation of the basic field $h_{\mu\nu}$ in this setting
is%
\footnote{In Eq.~(5.1.8) of \cite{ScharfLast} one actually finds the
formula: $s(h^{\mu\nu})=\frac12\big(\pa^\mu u^\nu + \pa^\nu u^\mu -
\eta^{\mu\nu}\pa_\ka u^\ka\big)$. One may pass from this form to ours
by redefining $h_{\mu\nu}$ as $h_{\mu\nu} -\frac12\eta_{\mu\nu}
\bb{h}$ and idetifying $u$ with our~$w$. For the quantum field
$h_{\mu\nu}(c)$ there is no difference because $\bb{h}=0$ and moreover
$(\del w)=0$. Notice that both fields have the same Feynman propagator
\eqref{eq:T0*hh}.}
is a gauge transformation: 
\begin{align}
\label{eq:sh}
s(h_{\mu\nu})=\sfrac12\big(\pa_\mu u_\nu + \pa_\nu u_\mu \big)
\end{align}
where $u_\mu$ is the ghost field. Thus $h_{\mu\nu}$ is not an
observable, but its field strength -- defined as in~\eqref{eq:Fcurlh}
-- is an observable. In fact, on the BRST Hilbert space $\mathcal{H}$
it is isomorphic to our original field $F$ of helicity 2 on the Wigner
Fock space, with only two degrees of freedom according to the
helicities $\vert h\vert =2$.

Let us now discuss how SQFT and BRST are related. This becomes most
transparent in the approaches by Scharf~\cite{ScharfLast} and
Dütsch~\cite{Duetsch05}, who have achieved similar results concerning
self-interactions of helicity 1 and 2, namely the necessity of quartic
interaction terms, by imposing ``perturbative gauge invariance'' in
the BRST setting. The cubic self-interaction density $L_1(x)$ from
gauge theory is not BRST-invariant, but
\begin{align}
\label{eq:sL} 
s\bigl(L_1(x)\bigr) = \pa_\mu T_1^\mu(x),
\end{align}
where $T_1^\mu$ involves the ghost fields. The subsequent analysis is
quite similar to ours (in fact it was a blueprint for ours): the
$\bS$-matrix is BRST-invariant in first order, but not in second order
for the same reason as in~\eqref{S12}. The second-order BRST
obstruction has to be cancelled by a second-order interaction $L_2$.
Which is possible because it has a form analogous to \eqref{eq:SI2}.

One could avoid the BRST obstruction if one had a BRST-invariant
interaction density. Assuming that there is a field $\phi$ on the
Krein space such that $s(\phi_\mu)=-\frac12u_\mu$, then it is feasible
to define the BRST-invariant potential
\begin{align*}
h^\phi_{\mu\nu}(x)= h_{\mu\nu}(x)+
\pa_\mu\phi_\nu+\pa_\nu\phi_\mu.
\end{align*}
In view of \eqref{eq:sh}, $\phi_\mu$ must be some primitive of
$h_{\mu\nu}$. Such a primitive can be constructed with the help of
string integrations over $h_{\mu\nu}$:
\begin{align*}
{}\phi_\mu = I_c^\nu h_{\mu\nu}+
 \sfrac12\del_\mu  I_c^\ka I_c^\la h_{\ka\la}.
\end{align*}
(To verify this claim, one needs Eq.~\eqref{eq:Ic_inverse_del}.)
Writing the interaction $L_1$ in terms of $h^\phi_{\mu\nu}$, one may
discard derivatives -- that are responsible for the right-hand side of
\eqref{eq:sL} -- to get a BRST-invariant interaction. The fields $F$
and $h^\phi$ being BRST-invariant, they ``descend'' to the BRST
Hilbert space $\mathcal{H}$. The restriction of $h^\phi_{\mu\nu}$ to
$\mathcal{H}$ coincides with $h_{\mu\nu}(x;c)$ on the Wigner Fock
space. In particular, $\eta^{\mu\nu}h^\phi_{\mu\nu}$ and $\pa^\mu
h^\phi_{\mu\nu}$ are null fields on the Krein space (their two-point
functions vanish). Finally, the BRST-invariant interaction coincides
with the string-localized $L_1(x;c)$ in~\eqref{eq:L1} up to a
divergence. In this way, one precisely arrives at the setup of the
present paper, where the task is to establish SI rather than BRST
invariance.

The great difference is that we proceeded without the detour through
Krein space and ghosts, from which BRST invariance has to bring us
back home to the physical Hilbert space. A second major advantage of
the string-localized approach is that \textit{interacting field}s (not
considered in the present paper) do exist on our Hilbert space,
whereas in general they are \textit{not} BRST-invariant in the gauge
theoretic setting \cite{MundRS23}. Whether string-localized quantum
field theory is the ``unkown formulation'' for which the BRST fomalism
has been an ``efficient placeholder'' (R. Stora) hangs nevertheless on
unsolved problems of renormalization of loop graphs with internal
stringlike lines.

\section{Graviton propagators and their renormalization}
\label{a:Propagators}

Let us split all propagators into their string-independent part and
the string-dependent ones:
\begin{align}
\label{eq:split}
\vev{T\phi\chi'}= \vev{T_\ast\phi\chi'}+\vev{T_c\phi\chi'}.
\end{align}
Recall the kinematic propagator \eqref{eq:T0hh}. It is composed of the string-integrated
differential operators $E$, $E'$, and $E''$ of \eqref{eq:E} which for
the purposes of \sref{sec:Selfcoupling} and \sref{sec:Cancellation} we rewrite
conveniently as  
\begin{align}
\label{eq:Eab}
E_{\mu\nu} =\eta_{\mu\nu} + a_{(\mu}\pa_{\nu)}, \qquad
E''_{\rho\sg} = \eta_{\rho\sg} + a'_{(\rho}\pa_{\sg)},
\qquad
E'_{\mu\rho} = \eta_{\mu\rho} + (b'_\mu \pa_\rho +b_\rho \pa_\mu),
\end{align}
with (using $\pa'=-\pa$) 
\begin{align}
\label{eq:ab}
\begin{aligned}
&a_\mu=  I_{\mu}+\sfrac12 I^2\pa_\mu, &\qquad &
a'_\rho=-I'_{\rho}+\sfrac12 I'^2\pa_\rho,
\\
&b_\rho=I_{\rho}-\sfrac12 (II')\pa_\rho, &\qquad& 
b'_\mu=  -I'_{\mu}-\sfrac12 (II')\pa_\mu.
\end{aligned}
\end{align}
This yields the decomposition
\begin{align}
\label{eq:T0*hh}
\vev{T_{0,\ast}h_{\mu\nu}h'_{\rho\sg}}&=
\sfrac12\big[\eta_{\mu\rho}\eta_{\nu\sg} +
\eta_{\mu\sg}\eta_{\nu\rho}-\eta_{\mu\nu}\eta_{\rho\sg}\big] D_{0,F},
\\
\label{eq:T0chh}
\vev{T_{0,c}h_{\mu\nu} h'_{\rho\sg}} &=
\sfrac12\big[\pa_{(\mu}a_{\nu), \rho\sg} +
a'_{\mu\nu,(\rho}\pa_{\sg)}\big] D_{0,F}
\end{align}
with
\begin{align}
\label{eq:A}
a_{\nu,\rho\sg} &= \eta_{\nu(\sg}b_{\rho)} + \sfrac12 b'_\nu
b_{(\rho}\pa_{\sg)} + b_\rho b_\sg \pa_\nu - a_\nu(\eta_{\rho\sg}+\sfrac12
a'_{(\rho}\pa_{\sg)}), 
\\
\label{eq:A'}
a'_{\mu\nu,\sg} &= \eta_{\sg(\nu}b'_{\mu)} + \sfrac12 b_\sg
b'_{(\mu}\pa_{\nu)} + b'_\mu b'_\nu \pa_\sg -
a'_\sg(\eta_{\mu\nu}+\sfrac12 a_{(\mu}\pa_{\nu)}).
\end{align}
$\vev{T_{0,\ast}hh}$ is the propagator as one would use in gauge theory.
The salient feature of $\vev{T_{0,r}hh}$, seen in \eqref{eq:T0chh}, is
that each term contains an uncontracted derivative, which plays a
major role in the computations. 

It ensues, with $-(I\pa)=(I'\pa)=1$,
\begin{align}\label{eq:abc}
a_{\nu,\rho\sg}\pa^\nu =\eta_{\rho\sg} + b_{\rho\sg} \square
+ c_{(\rho}\pa_{\sg)}, \qquad
a'_{\mu\nu,\sg}\pa^\sg = \eta_{\mu\nu} + b'_{\mu\nu} \square
+ c'_{(\mu}\pa_{\nu)},
\end{align}
where
\begin{align}
\begin{aligned}
&b_{\rho\sg} = b_\rho b_\sg - \sfrac14  (II')
b_{(\rho}\pa_{\sg)} -\sfrac12 I^2(\eta_{\rho\sg} + \sfrac12 a'_{(\rho}\pa_{\sg)}),
&\quad& c_\rho=\sfrac12 (b_\rho+a'_\rho), \\ 
&b'_{\mu\nu}=b'_\mu b'_\nu - \sfrac14  (II') b'_{(\mu}\pa_{\nu)} -\sfrac12 I^2(\eta_{\mu\nu} +
\sfrac12 a_{(\mu}\pa_{\nu)})
,&\quad& c'_\mu=\sfrac12 (b'_\mu+a_\mu).
\end{aligned}
\label{eq:bc}
\end{align}
These relations become instrumental in the proof of Lemma
\ref{lem:cancel_self}
underlying Theorem
\ref{thm:resol_self_obstr}.

The kinematic propagator violates the trace conditions:
\begin{align}
\label{eq:trace_viol}
\eta^{\mu\nu}\vev{T_0h_{\mu\nu}h'_{\rho\sg}} =
\big[(I_\rho-(II')\pa_\rho)(I_\sg-(II')\pa_\sg)-\sfrac12I^2
E''_{\rho\sg}\big] \square D_{0,F}(x-x'),
 \end{align}
where $\square D_{0,F}=-i\delta(x-x')$. It therefore requires a 
renormalization exploiting the freedom \eqref{eq:freedom_of_reno}, so
as to cancel these traces. Due to the good short distance scaling of 
$h_{\mu\nu}$, this renormalization cannot have a string-independent part.
In order to allow the cancellations in Lemma \ref{lem:cancel_self}
and Lemma \ref{lem:cancel_matter}, it must be of the special form
\begin{align}
\label{eq:p}
\vev{T_{r}h_{\mu\nu} h'_{\rho\sg}} = p_{\mu\nu,\rho\sg} \,
i\delta(x-x') \quad \word{with}
p_{\mu\nu,\rho\sg} =   \pa_{(\mu}p_{\nu),\rho\sg}+p'_{\mu\nu,(\rho}\pa_{\sg)}
\end{align}
exhibiting an uncontracted derivative in each term.

There is a six-parameter family of renormalizations of the form
\eqref{eq:p} that restore the trace conditions:
\begin{align}
\notag
p_{\nu,\rho\sg}=
& \big[\sfrac12 I'_\nu( I_\rho I_\sg  -\sfrac12 I^2\eta_{\rho\sg})
+ \sfrac18 ((II')^2-\sfrac12I^2I'^2)\eta_{\nu(\rho}\pa_{\sg)}
\\ \label{eq:p0}
&\qquad+ \sfrac18\big(I^2 I'_\nu I'_{(\rho}\pa_{\sg)}-2(II') I'_\nu I_{(\rho}\pa_{\sg)}+I'^2 I_\nu I_{(\rho}\pa_{\sg)}\big)\big] \\   \notag       
& +  c_1\big[(I_\nu+I'_\nu)I_{(\rho}I'_{\sg)}                                                                                 +(II')\big(\sfrac12I'_\nu   I_{(\rho}\pa_{\sg)} -\sfrac12 I_\nu I'_{(\rho}\pa_{\sg)}\big)\big]\\ \notag
& +   c_2(II')\big[(I_\nu+I'_\nu)\eta_{\rho\sg}-I_\nu I'_{(\rho}\pa_{\sg)} + I'_\nu I_{(\rho}\pa_{\sg)}\big] \\ \notag       
&+c_3\big[I^2 \eta_{\nu(\rho}I'_{\sg)}-\sfrac12I'^2 I_\nu   I_{(\rho}\pa_{\sg)} -\sfrac12I^2 I'_\nu I'_{(\rho}\pa_{\sg)}  \big]\\ \notag       
&+c_4\big[I'^2 \eta_{\nu(\rho}I_{\sg)}+\sfrac12I'^2 I_\nu I_{(\rho}\pa_{\sg)}+\sfrac12I^2 I'_\nu I'_{(\rho}\pa_{\sg)}\big]\\ \notag       
&+c_5(II')\big[\eta_{\nu(\rho}I_{\sg)}-\sfrac12I_\nu  I'_{(\rho}\pa_{\sg)} - \sfrac12 I'_\nu I_{(\rho}\pa_{\sg)}\big]\\  \label{eq:pc}
&+
c_6(II')\big[ \eta_{\nu(\rho}I'_{\sg)}+\sfrac12I_\nu   I'_{(\rho}\pa_{\sg)} +\sfrac12 I'_\nu I_{(\rho}\pa_{\sg)}\big] ,
\end{align}
and $p'_{\mu\nu,\rho}$ given by the replacements $\sg\to\mu$,
$\nu\lra\rho$, $I'\lra I$ and $\pa\to \pa'=-\pa$.
\begin{lemma}
  \label{lem:cancel_self}
The renormalization part $T_r\underbracket{L_1L}\!'_1$ cancels the
second line \eqref{eq:TL1L1_string} of $T_0\underbracket{L_1L}\!'_1$
for all values of the renormalization constants $c_n$.
\end{lemma}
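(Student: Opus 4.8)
The plan is to reduce the renormalization contribution to an expression bilinear in the composite fields $U^{\mu\nu}$, $S$ (at $x$) and $U'^{\rho\sg}$, $S'$ (at $x'$), and then match it term by term against \eqref{eq:TL1L1_string}. First I would invoke \eqref{eq:Wickd2} — valid because the compatible renormalization \eqref{eq:Td=dT} of $\vev{\Tren\pa h\,h'}$ and $\vev{\Tren\pa h\,\pa'h'}$ applies to the $T_r$-part as well — to write, modulo divergences,
\begin{align*}
T_r\underbracket{L_1L}\!'_1 &\modd U^{\mu\nu}\vev{T_r h_{\mu\nu}h'_{\rho\sg}}U'^{\rho\sg} \\
&= i\,U^{\mu\nu}\bigl(\pa_{(\mu}p_{\nu),\rho\sg}+p'_{\mu\nu,(\rho}\pa_{\sg)}\bigr)\delta(x-x')\,U'^{\rho\sg},
\end{align*}
inserting the prescribed form \eqref{eq:p}. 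The defining feature of \eqref{eq:p} — exactly one uncontracted derivative in every term — is precisely what drives the reduction.

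Next I would integrate those uncontracted derivatives by parts. In the $\pa_{(\mu}p_{\nu),\rho\sg}$ piece the $\pa_\mu$ comes off $U^{\mu\nu}$ and, by the divergence identity \eqref{eq:dU}, is replaced by $\pa^\nu S$; in the $p'_{\mu\nu,(\rho}\pa_{\sg)}$ piece the $\pa_\sg$ is turned into $-\pa'_\sg$ on $\delta(x-x')$ and transferred onto $U'^{\rho\sg}$, again producing an $S'$ via \eqref{eq:dU} at $x'$. Thus already a single integration by parts splits the expression into an $(S,U')$-part and a $(U,S')$-part, with no mixed $(U,U')$ term surviving. Inside $p_{\nu,\rho\sg}$ and $p'_{\mu\nu,\sg}$ there remain further uncontracted derivatives on the free slots and $\eta$-contractions; iterating the same two moves, and using the trace relation \eqref{eq:US} (which turns a trace $U'^\rho{}_\rho$ produced by an $\eta_{\rho\sg}$ into $2S'$ up to a $\square\bb{h'h'}$ term), collapses everything to bilinears in $\{S,\pa S,U\}$ and $\{S',\pa'S',U'\}$ dressed with string integrations acting on $\delta(x-x')$, the trace pieces feeding precisely the $(S,S')$-structure of the third term of \eqref{eq:TL1L1_string}. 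Using $(I\pa)=-1$, $(I'\pa)=1$, $\square D_{0,F}=-i\delta$ and the definitions \eqref{eq:ab}, one then checks that the $c_n$-independent bracket of \eqref{eq:p0} reproduces the three terms of \eqref{eq:TL1L1_string} with the opposite sign, i.e.\ cancels them.

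For the clause ``for all values of the renormalization constants $c_n$'' I would treat the six brackets multiplying $c_1,\dots,c_6$ in \eqref{eq:pc} one by one, showing that each, once sandwiched as $U^{\mu\nu}(\cdot)\,\delta(x-x')\,U'^{\rho\sg}$ and run through the same reduction, collapses to a total divergence and therefore drops out. Structurally this is forced: those brackets are, by construction, the trace-restoration-\emph{preserving} freedom in \eqref{eq:p}, so the (trace-sensitive) left-hand side of the matching cannot depend on them; concretely it again follows from the uncontracted-derivative form together with $\pa_\mu U^{\mu\nu}=\pa^\nu S$ and $(I\pa)=-1$, each integration by parts either killing a term or re-expressing it in a combination that telescopes.

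The main obstacle I expect is purely computational: carrying the rapidly proliferating string-integrated tensor structures through the nested integrations by parts and verifying (i) that the surviving non-divergence part is \eqref{eq:TL1L1_string} exactly, with correct signs and numerical factors, and (ii) that no higher-scaling-dimension remnant survives — in particular no term in which a derivative still acts on $\delta(x-x')$, which would signal an unwanted $\square\bb{hh}$-type contribution not belonging to \eqref{eq:TL1L1_string}. One must also confirm that none of the integrations by parts generates a boundary term at string infinity, which is guaranteed by the compact support of the profile function $c$ (Definition~\ref{def:Ic}).
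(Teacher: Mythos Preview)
Your proposal is correct and follows essentially the same route as the paper: write $T_r\underbracket{L_1L}\!'_1$ via \eqref{eq:p}, integrate the uncontracted derivatives by parts using $\pa_\mu U^{\mu\nu}=\pa^\nu S$ to reduce to $S\cdot p_{\nu,\rho\sg}\pa^\nu\delta\cdot U'^{\rho\sg}$ plus its primed counterpart, then match against \eqref{eq:TL1L1_string} for the $c_n$-independent piece and check that each $c_n$-bracket vanishes separately.

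Two small remarks. First, the paper streamlines the matching by first reducing the \emph{target} \eqref{eq:TL1L1_string} to the cleaner form \eqref{eq:TL1L1_string_a} via the elementary identities \eqref{eq:formulas}; this keeps the bookkeeping short. Second, invoking the trace relation \eqref{eq:US} is unnecessary and would actually complicate things: the $\eta_{\rho\sg}$ pieces in $b_{\rho\sg}$ and in $p_{\nu,\rho\sg}\pa^\nu$ match directly as coefficients of $U'^\rho{}_\rho$, so there is no need to trade the trace for $2S'+\frac12\square\bb{h'h'}$ and then chase the extra $\square\bb{h'h'}$ term. Your ``structural'' argument for $c_n$-independence is only heuristic (it presupposes the cancellation you want to prove); the actual content is the short computation you sketch afterwards, which is exactly what the paper does.
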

{\em Proof:} We repeatedly apply the identities (using only
  integrations by part w.r.t.\ $x'$)
  \begin{align} \notag
    I_{(\rho}\pa_{\sg)}i\delta(x-x') U'^{\rho\sg}& \modd 2I_\sg
    i\delta(x-x')\pa'^\sg S' \modd 2I_\sg\pa^\sg
    i\delta(x-x') S'=-2 i\delta(x-x') S',\\ \notag
I'_{(\rho}\pa_{\sg)}i\delta(x-x') U'^{\rho\sg} & \modd 2I'_\sg
    i\delta(x-x')\pa'^\sg S'\modd 2I'_\sg\pa^\sg
    i\delta(x-x')S'= 2i\delta(x-x') S', \\
\pa_\rho\pa_{\sg}i\delta(x-x') U'^{\rho\sg} &\modd i\delta(x-x')
    \pa'_\rho\pa'_\sg U'^{\rho\sg}  =  i\delta(x-x') \square' S'
\modd i\square \delta(x-x') S'.\quad    \label{eq:formulas}
  \end{align}
We first reduce \eqref{eq:TL1L1_string} further by working out
$b_{\rho\sg} i\delta(x-x') U'^{\rho\sg}$ with \eqref{eq:bc} and
\eqref{eq:formulas}: 
  \begin{align*}
 b_{\rho\sg} i\delta(x-x') U'^{\rho\sg}
 &\modd 
(I_\rho I_\sg -\sfrac12   I^2\eta_{\rho\sg})i\delta(x-x')  U'^{\rho\sg} 
\\ &\quad
   + \sfrac12 \big[\big( 3 (II')+I^2\big)+\big((II')^2  -\sfrac12I^2I'^2\big)\square \big]i\delta(x-x')S'. 
  \end{align*}

  Thus, \eqref{eq:TL1L1_string} becomes
  \begin{align} \notag
    \eqref{eq:TL1L1_string} \modd & - S (I_\rho I_\sg -\sfrac12
    I^2\eta_{\rho\sg})i\delta(x-x') U'^{\rho\sg} + (x\lra x')
     \\ \label{eq:TL1L1_string_a}&-S
    \big[(I+I')^2
    +((II')^2-\sfrac12I^2I'^2)\square \big]i\delta(x-x') S'.
  \end{align}
  
On the other hand, we compute the contribution from $\pa_{(\mu}p_{\nu),\rho\sg}$ to
$T_r\underbracket{L_1L}\!'_1$:
\begin{align*}
U^{\mu\nu} \pa_{(\mu}p_{\nu),\rho\sg} \, i\delta(x-x')
  U'^{\rho\sg} \modd 2S  \pa^\nu p_{\nu,\rho\sg}\, i\delta(x-x')
  U'^{\rho\sg}. \end{align*}
Inserting only \eqref{eq:p0} for $p_{\nu,\rho\sg}$, we get
\begin{align*}
S(I_\rho I_\sg -\sfrac12 I^2\eta_{\rho\sg})i\delta(x-x') U'^{\rho\sg}
+ S\cdot \sfrac12\big[ (I+I')^2+
\big((II')^2-\sfrac12I^2I'^2\big)\square \big]i\delta(x-x')
S'.
\end{align*}
Adding the corresponding contribution from
$p'_{\mu\nu,(\rho}\pa_{\sg)}$, we get an exact cancellation of
\eqref{eq:TL1L1_string_a}.

Computing $p^i_{\nu,\rho\sg}\pa^\nu$ for the six structures in \eqref{eq:pc}
and applying then \eqref{eq:formulas}, one very quickly sees that, upon
adding the corresponding term from $p'^i_{\mu\nu,\rho}$, one gets $0$
for all values of the renormalization constants $c_n$. This proves the
Lemma. \hfill $\square$ 

\begin{lemma}\label{lem:cancel_matter}
The renormalization part $T_r\underbracket{L_1L}\!'_{1,\rm mat}$
cancels the 
part \eqref{eq:string} of $T_0\underbracket{L_1L}\!'_{1,\rm mat}$
for arbitrary values of the renormalization constants  $c_1,\dots,c_6$.
\end{lemma}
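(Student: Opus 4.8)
The plan is to mirror the proof of Lemma~\ref{lem:cancel_self}, using the fact that here the renormalized graviton propagator is contracted against the conserved, point-localized tensor $\Theta'^{\rho\sg}$ instead of $U'^{\rho\sg}$, which makes almost everything collapse. By \eqref{inter} and \eqref{eq:p}, the renormalization piece is $T_r\underbracket{L_1L}\!'_{1,\rm mat}\modd \sfrac12 U^{\mu\nu}\bigl(\pa_{(\mu}p_{\nu),\rho\sg}+p'_{\mu\nu,(\rho}\pa_{\sg)}\bigr)\,i\delta(x-x')\,\Theta'^{\rho\sg}$. First I would discard the $p'_{\mu\nu,(\rho}\pa_{\sg)}$ contribution: its uncontracted derivative $\pa_{\sg)}$, transferred onto $\Theta'^{\rho\sg}$ by an integration by parts in $x'$, yields a divergence by conservation $\pa'_\sg\Theta'^{\rho\sg}=0$ — exactly as the analogous $a'_{\mu\nu,(\rho}\pa_{\sg)}$ term was dropped in passing to \eqref{T0L1L1m}.

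Next, in the surviving term $\sfrac12 U^{\mu\nu}\pa_{(\mu}p_{\nu),\rho\sg}\,i\delta(x-x')\,\Theta'^{\rho\sg}$ I would integrate by parts twice, exactly as in Lemma~\ref{lem:cancel_self}: the first step transfers $\pa_{(\mu}$ onto $U^{\mu\nu}$ and invokes $\pa_\mu U^{\mu\nu}=\pa^\nu S$ from \eqref{eq:dU}; the second lands $\pa^\nu$ on the operator $p_{\nu,\rho\sg}$, leaving $S\,\bigl(\pa^\nu p_{\nu,\rho\sg}\,i\delta(x-x')\bigr)\Theta'^{\rho\sg}$ modulo divergences (any derivative reaching $\Theta'$ drops). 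Then I would split $p_{\nu,\rho\sg}=p^{(0)}_{\nu,\rho\sg}+\sum_{i=1}^6 c_i\,p^i_{\nu,\rho\sg}$ per \eqref{eq:p0}--\eqref{eq:pc}. For the six $c_i$-structures, the computation of $p^i_{\nu,\rho\sg}\pa^\nu$ together with the identities \eqref{eq:formulas} is precisely the one already carried out at the end of the proof of Lemma~\ref{lem:cancel_self}; it gives a divergence, and here even more directly, since every residual term carries a $\pa_\rho$ or $\pa_\sg$ that conservation of $\Theta'$ annihilates — so the cancellation is manifestly independent of $c_1,\dots,c_6$. For the fixed part $p^{(0)}_{\nu,\rho\sg}$ I would reuse verbatim the self-coupling evaluation of $2S\pa^\nu p^{(0)}_{\nu,\rho\sg}\,i\delta\,(\cdot)^{\rho\sg}$: its decomposition into an $(\cdot)^{\rho\sg}$-term plus $S'$-terms shows that the part of $\pa^\nu p^{(0)}_{\nu,\rho\sg}$ free of $\pa_\rho,\pa_\sg$ equals $\sfrac12\bigl(I_\rho I_\sg-\sfrac12 I^2\eta_{\rho\sg}\bigr)$, while all the $S'$-terms trace back to such derivatives and therefore vanish against the conserved $\Theta'^{\rho\sg}$. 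This leaves $T_r\underbracket{L_1L}\!'_{1,\rm mat}\modd \sfrac12 S\bigl(I_\rho I_\sg-\sfrac12 I^2\eta_{\rho\sg}\bigr) i\delta(x-x')\,\Theta'^{\rho\sg}$, which is exactly $-\eqref{eq:string}$; adding the $(x\lra x')$ term gives the asserted cancellation.

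The main obstacle is the bookkeeping in the last two steps: one must check carefully that, in $\pa^\nu p_{\nu,\rho\sg}$ acting on $i\delta(x-x')$ and contracted with $\Theta'^{\rho\sg}$, every term other than $\bigl(I_\rho I_\sg-\sfrac12 I^2\eta_{\rho\sg}\bigr)$ either carries an explicit $\pa_\rho$ or $\pa_\sg$ (killed by $\pa'_\rho\Theta'^{\rho\sg}=0$) or is a total divergence, and that the surviving coefficient matches \eqref{eq:string} with the correct sign and the factor $\sfrac12$ from \eqref{inter}. The manipulations with the non-commuting string-integrated operators $I,I'$ are handled exactly as in Lemma~\ref{lem:cancel_self} via $(I\pa)=-1$, $(I'\pa)=1$ and the reductions \eqref{eq:formulas}; no new identity is needed, so beyond this routine (if tedious) tracking there is no genuine difficulty.
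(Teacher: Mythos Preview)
Your proposal is correct and follows essentially the same route as the paper: drop the $p'_{\mu\nu,(\rho}\pa_{\sg)}$ piece by conservation of $\Theta'$, integrate by parts twice using \eqref{eq:dU} to reach $S\,p_{\nu,\rho\sg}\pa^\nu\,i\delta(x-x')\,\Theta'^{\rho\sg}$, and then observe that every term in $p_{\nu,\rho\sg}$ carrying an explicit $\pa_\rho$ or $\pa_\sg$ dies against the conserved $\Theta'$, leaving exactly $\sfrac12(I_\rho I_\sg-\sfrac12 I^2\eta_{\rho\sg})$ from the fixed part of \eqref{eq:p0} and nothing from the $c_i$-terms (the $c_1,c_2$ pieces vanish identically under $\pa^\nu$, and $c_3,\dots,c_6$ produce only $\pa_{(\rho}\cdots$). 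The only cosmetic difference is that the paper first strips the $\pa_\rho,\pa_\sg$ terms to obtain the reduced operator $\wt p_{\nu,\rho\sg}$ in \eqref{eq:tildep} and then contracts with $\pa^\nu$, rather than invoking the $U'$--$S'$ identities \eqref{eq:formulas} from Lemma~\ref{lem:cancel_self}; since those identities are specific to $U'^{\rho\sg}$, your ``reuse verbatim'' is slightly overstated, but the actual check against $\Theta'$ is strictly simpler and your argument goes through unchanged.
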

\begin{proof}
We compute 
\begin{align}\label{inter2}
T_{ r}\underbracket{L_1L}\!'_{1,\rm mat} &\modd \sfrac12 U^{\mu\nu}
 \vev{T_{r}h_{\mu\nu}h'_{\rho\sg}} \Theta'^{\rho\sg} \modd  \sfrac12
U^{\mu\nu} \pa_{(\mu}p_{\nu),\rho\sg} i\delta(x-x') \Theta'^{\rho\sg}.
\end{align}
The operators $p_{\nu,\rho\sg}$ are given in  \eqref{eq:p0}, \eqref{eq:pc}, and the
contributions from $p'_{\mu\nu,(\rho}\pa_{\sg)}$ can be dropped because
$\Theta$ is conserved.  This gives
\begin{align}\notag 
T_{r}\underbracket{L_1L}\!'_{1,\rm mat} &\modd  i\sfrac12
  U^{\mu\nu}  \pa_{(\mu} p_{\nu),\rho\sg}  \delta(x-x')  \Theta'^{\rho\sg} \modd 
 - \pa_\mu U^{\mu\nu}  p_{\nu,\rho\sg}  i\delta(x-x') 
  \Theta'^{\rho\sg} \\
&\modd - i\pa^\nu S  p_{\nu,\rho\sg}  \delta(x-x') 
 \Theta'^{\rho\sg} \modd  iS p_{\nu,\rho\sg}\pa^\nu  \delta(x-x') 
   \Theta'^{\rho\sg} .
\end{align}
We may also drop all derivatives $\pa_\rho$, $\pa_\sg$ 
within $p_{\nu,\rho\sg}$. This leaves 
\begin{align}\label{eq:tildep}\wt p_{\nu,\rho\sg}&=
\sfrac12 I'_\nu I_\rho I_\sg  -\sfrac14 I^2I'_\nu\eta_{\rho\sg} + 
c_1(I_\nu+I'_\nu)I_{(\rho}I'_{\sg)}+  c_2(II')(I_\nu+I'_\nu)\eta_{\rho\sg}
  \\ &+c_3I^2 \eta_{\nu(\rho}I'_{\sg)}+c_4I'^2 \eta_{\nu(\rho}I_{\sg)}+c_5(II')
\eta_{\nu(\rho}I_{\sg)}+c_6(II') \eta_{\nu(\rho}I'_{\sg)}.
\end{align}
Contracting \eqref{eq:tildep} with $\pa^\nu$, we see that the
contributions from $c_1$ and $c_2$ vanish identically and the terms
$c_3,\dots,c_6$ do not contribute after integration by parts due to
the conservation of $\Th'^{\rho\sg}$. The first two terms in
\eqref{eq:tildep} cancel \eqref{eq:string}. This proves the Lemma. 
\end{proof}

\section{Expansion of the classical Einstein action and matter couplings}
\label{a:Expansion}

We may most efficiently choose as starting point an action $\bS_E$ for
gravity introduced by Einstein himself in~\cite{year16} -- consult also
\cite{AlvarezAnero} -- instead of the more popular
Einstein--Hilbert action. The aforesaid action contains, through
the Riemann-Christoffel symbols $\Ga^\bullet_{\bullet\bullet}$, \textit{only first
derivatives} of the metric tensor $g$, which is the fundamental
field. It differs from the latter action by a divergence, and is
therefore classically equivalent. It is of the form:
\begin{align}
\sfrac12\kappa^2\,\bS_E = \int d^4x\,\sqrt{-g(x)}\,g^{\mu\nu}(x)
\bigl(\Ga^\al_{\mu\nu} \Ga^\be_{\al\be} - \Ga^\al_{\mu\be}
\Ga^\be_{\nu\al}\bigr)(x) =: \sfrac12\kappa^2\int d^4x\,S_E(x).
\label{endoftheprinciple}
\end{align}
Here, $-g\equiv -\det g>0$ and $\kappa^{2}=32\pi G$. Since
among other things we undertook to prove that classical general
relativity can be determined from perturbative quantum field theory,
it makes sense to expand the former in terms of~$\kappa$. For our
purposes, we remove the factor $\kappa^2$ in~\eqref{endoftheprinciple} by the
change of variables such that $\Ga^\al_{\mu\nu}$ are $O(\kappa)$. Specifically,
it is convenient to introduce the Goldberg variable
\begin{align}\label{eq:gold}
\go^{\mu\nu} := \sqrt{-g}g^{\mu\nu}; \quad \go_{\mu\nu} :=
\frac{g_{\mu\nu}}{\sqrt{-g}}.
\end{align}
A straightforward computation renders $S_E(x)$ in
\eqref{endoftheprinciple} by means of the
$\go^{\mu\nu}$~\cite[Eq.~(5.5.20)]{ScharfLast}:
\begin{align}
\ka^2S_E(x) &= -\go_{\al\be}\,\del_\nu \go^{\al\mu}\,
\del_\mu\go^{\be\nu} + \sfrac12\go_{\al\rho}\,\go_{\be\sg}\,\go^{\mu\nu}
\del_\mu\go^{\rho\be}\del_\nu\go^{\al\sg}
\notag
\\
&-\sfrac14\go_{\mu\nu}\,\go_{\rho\sg}\,\go^{\al\be}
\del_\al\go^{\mu\nu}\del_\be\go^{\rho\sg}.
\label{eq:indiancircle}
\end{align}

In order to make contact with SQFT theory on Minkowski space, we
parametrize 
\begin{align}
\go^{\mu\nu} = \eta^{\mu\nu} + \ka h^{\mu\nu},
\label{laleyylosprofetas}
\end{align}
where the dynamical field (the ``metric deviation'') $h^{\mu\nu}$ is small only in the sense of
\textit{scattering theory}, that is, it goes to zero at large
distances. That is  precisely what is required in order to
eventually apply the string-independence condition. Therefore we
develop $\ka^2S_E(x)$ following Eq.~\eqref{eq:indiancircle}, with the
help of the inverse metric, of the form: $\go_{\al\be}(x) =
\eta_{\al\be} - \ka h_{\al\be}(x) + \ka^2 h_{\al\rho}(x)h^\rho_\be(x)
- \cdots$.

Collecting in the expression on the right of
\eqref{eq:indiancircle} terms of order $\ka$, one obtains
\cite[Ch.~5.5]{ScharfLast} three-graviton self-couplings with two
derivatives:
\begin{align} \notag
S_E^{(1)}(x) &= \ka h^{\mu\nu} \bigl[\sfrac12\del_\mu h^{\al\be}\del_\nu
h_{\al\be} + \del^\be h_{\mu\al}\del^\al h_{\nu\be} - \del^\rho
h^\al_\mu\del_\rho h_{\nu\al} - \sfrac14 \del_\mu h^\rho_\rho\,
\del_\nu h^\rho_\rho + \sfrac12 \del_\al h^{\mu\nu}\del^\al
h^\rho_\rho\bigr].
\end{align}
Collecting as well terms of order $\ka^2$, one obtains 
four-graviton couplings with two derivatives:
\begin{align}
  \notag
&S_E^{(2)}(x) = -\ka^2\bigl[h_{\al\rho}h^\rho_\be\,\del_\nu
h^{\al\mu}\del_\mu h^{\be\nu} - \sfrac14 h^{\mu\nu} h_{\rho\sg}
\del_\al h_{\mu\nu}\del^\al h^{\rho\sg} -
h^{\mu\nu}h_{\al\rho}\del_\mu h^\rho_\sg\del_\nu h^{\al\sg}
\\ \notag
&+ h_{\rho\be}h^\be_\sg\del_\mu h^{\al\rho}\del^\mu h^\sg_\al + \sfrac12
h_{\al\rho}h_{\be\sg}\del_\mu h^{\al\sg}\del^\mu h^{\be\rho} - \sfrac12
h_{\rho\be}h^\be_\sg\,\del_\al h^{\rho\sg}\del^\al h^\rho_\rho + \sfrac12
h_{\mu\nu}\del_\al h^{\mu\nu}h^{\al\be}\del_\be h^\rho_\rho\bigr].
\end{align} 

Take now the momentuous step of substituting the \textit{SQFT graviton
field} $h^{\mu\nu}(x;c)$ for $h^{\mu\nu}(x)$. This yields the
immediate dividend that the two terms involving the trace
$h^\rho_\rho$ in $S_E^{(1)}$ do vanish. In the understanding that from
now on $h$ denotes the string-localized graviton, we replace
$S_E^{(n)}(x)$ by $\frac{\kappa^n}{n!}L_{n}[h]$ for $n=1,2$ -- recall
the expansion convention \eqref{L}:
\begin{align}
L_1[h] &= h^{\mu\nu}\bigl[\sfrac12\del_\mu h^{\al\be}\del_\nu
h_{\al\be} + \del^\be h_{\mu\al}\del^\al h_{\nu\be} - \del^\rho
h^\al_\mu\del_\rho h_{\nu\al}\bigr](x;c) 
\notag 
\\
&\equiv\bigl[\sfrac12\doublcont{\pa_\mu h \pa_\nu h }h ^{\mu\nu} + (\pa_\nu h h
\pa_\mu h )^{\mu\nu} - \doublcont{h \pa_\mu h \pa^\mu h}\bigr](x;c) ,
\label{eq:eltrucouno}
\\
\sfrac12 L_2[h] &= \bigl[-h_{\al\rho}h^\rho_\be\,\del_\nu h^{\al\mu}\del_\mu
h^{\be\nu} - \sfrac14 h_{\mu\nu} \pa_\al h^{\mu\nu} h_{\rho\sg} \pa^\al
h^\rho\sg - h^{\mu\nu}h_{\al\rho}\del_\mu h^\rho_\sg\del_\nu
h^{\al\sg} 
\notag
\\
&\qquad+ h_{\rho\be}h^\be_\sg\del_\mu h^{\al\rho}\del^\mu
h^\sg_\al+ \sfrac12 h_{\al\rho}h_{\be\sg}\del_\mu h^{\al\sg}\del^\mu
h^{\be\rho}\bigr](x;c)
\notag
\\ 
&\equiv \bigl[-(\pa_\nu h h h \pa_\mu h)^{\mu\nu} -
\sfrac14\bb{h\pa_\mu h}\bb{h\pa^\mu h} - \bb{h\pa_\mu h \pa_\nu
h}h^{\mu\nu} 
\notag
\\
&\qquad+ \bb{h\pa_\mu h \pa^\mu h h}+\sfrac12 \bb{h\pa_\mu h h \pa^\mu
h} \bigr](x;c).
\label{eq:eltrucodos}
\end{align}
Of course these are \eqref{eq:L1} and \eqref{eq:L2_ind} again, modulo
divergences; in particular, the last line of \eqref{eq:eltrucodos} is
separately a divergence. We remark that this is as well similar to the
result of imposing the Hilbert gauge condition:
\begin{align*}
h_\mu^\mu\equiv\eta^{\mu\nu}h_{\mu\nu}=0, \qquad \pa^\mu h_{\mu\nu}=0,
\end{align*}
because $h_{\mu\nu}(x;c)$ satisfies these relations as operator
identities, see \eqref{eq:properties_h}.

Note that because the Riemann--Christoffel symbols
\begin{align*}
\Gamma^\tau_{\ka\la} := g^{\tau\mu}\Gamma_{\mu\ka\la} := \sfrac12
g^{\tau\mu}\big(\pa_\ka g_{\la\mu}+\pa_\la g_{\ka\mu}-\pa_\mu
g_{\ka\la}\big)
\end{align*}
are obviously $O(\kappa)$, also the Riemann tensor is $O(\kappa)$,
specifically
\begin{align}
\label{eq:RF}
R^{\tau}_{\ka\nu\la} := \pa_\nu \Gamma^\tau_{\la\ka}-\pa_\la
\Gamma^\tau_{\nu\ka} + \Gamma^\tau_{\nu\mu}\Gamma^\mu_{\la\ka}-
\Gamma^\tau_{\la\mu}\Gamma^\mu_{\nu\ka} = \sfrac12\kappa
\eta^{\tau\mu}F_{[\mu\ka][\nu\la]} + O(\kappa^2),
\end{align}
where $F$ is defined in terms of the classical field $h$ in the
analogous way as the quantum field strength $F$ is recovered from the
string-localized potential $h(c)$ in \eqref{eq:Fcurlh}. The Ricci tensor
and scalar are $O(\kappa)$ as well -- which is why we work with
the Einstein rather than Einstein-Hilbert Lagrangian. 

For the comparison of the classical matter couplings with the induced
ones from SQFT in Theorem \ref{thm:reproduce_matter}, one has to expand the
generally covariant matter Lagrangians. Because the comparison
requires the substitution of the string-localized quantum field $h_{\mu\nu}(x;c)$ for the classical field
$h_{\mu\nu}(x)$, we shall drop $\bb{h}$ and $\pa_\mu h^{\mu\nu}$
wherever they arise in the classical expansion. Then to $O(\kappa^2)$:
\begin{align}\label{eq:g_exp}
\sqrt{-g}  = 1-\sfrac{\kappa^2}4\bb{hh}, \quad
g^{\mu\nu} = (1+\sfrac{\kappa^2}4\bb{hh})\eta^{\mu\nu} + \kappa
h^{\mu\nu}.
\end{align}
This gives the expansion of the generally covariant scalar Lagrangian:
\begin{align}
\label{eq:sc_exp}L_{g,\phi}
=&~\frac{\sqrt{-g}}2\big(g^{\mu\nu}\pa_\mu\chi\pa_\nu\chi - m^2\chi^2)
\\
=&~L_{0,\phi} + \sfrac\kappa2 \doublcont{h\Theta_{\phi}} +
\sfrac{\kappa^2}2\Bigl[\sfrac12\doublcont{hh}L_{0,\phi} +
\sfrac14\doublcont{hh}\doublcont{\Theta_{\phi}}\Bigr] + \dots \notag
\end{align}
and of the generally covariant Maxwell Lagrangian:
\begin{align}
\label{eq:max_exp}
L_{g,F} = &~\sqrt{-g}\big(-\sfrac14 g^{\mu\nu}
g^{\ka\la}F_{\mu\ka} F_{\nu\la}\big)
\\
= &~L_{0,F} + \sfrac\kappa2\doublcont{h\Theta_{F}} +
\sfrac{\kappa^2}2\big[\sfrac12\doublcont{hh}L_{0,F} - \sfrac12
h^{\mu\nu}h^{\ka\la}F_{\mu\ka}F_{\nu\la} \big] + \dots \notag
\end{align}
The case of the Dirac field is more intricate. The generally covariant Lagrangian is
\begin{align}
  L_{g,\psi}=\sqrt{-g} \cdot \ol\psi\big(\sfrac i2 (\gamma_g^\mu D_\mu -
  \stackrel{\leftarrow} D_\mu\!\!\gamma_g^\mu)-m\big)\psi
\end{align}
where the covariant $\gamma$-matrices 
satisfy $\gamma_g^\mu\gamma_g^\nu+\gamma_g^\nu\gamma_g^\mu=2g^{\mu\nu}$,
and the covariant derivative is defined with the spin
connection. These are defined in terms of a real tetrad
field $e^\mu_\al(x)$ such that
\begin{align}
  g^{\mu\nu}(x)=e^\mu_\al(x) e^\nu_\be(x) \eta^{\al\be}.
\end{align}


The tetrad gives $\gamma_g^\mu$ in terms
of ``flat'' Dirac matrices satisfying
$\gamma^\al\gamma^\be+\gamma^\be\gamma^\al=2\eta^{\al\be}$:
\begin{align}
\label{eq:gamma_g}  \gamma_g^\mu:=e^\mu_\al \gamma^\al
\end{align}
as well as the
coefficients of the spin connection:
\begin{align}
  \omega_{\mu;\al\be}=-\omega_{\mu;\be\al}:= g_{\ka\la}\,e^\ka_\al  D_\mu e^\la_\be=g_{\ka\la}\,e^\ka_\al \big(\pa_\mu e^\la_\be
+ \Gamma^\la_{\mu\nu} e^\nu_\be\big).\end{align}
Then, the covariant derivatives of the spinors are
\begin{align}
 D_\mu\psi = \del_\mu\psi + \sfrac18 \omega_{\mu;\al\be}\,
  [\gamma^\al,\gamma^\be]\psi, \qquad  D_\mu\ol\psi  = \del_\mu\ol\psi
  -\sfrac18 \omega_{\mu;\al\be} \,\ol\psi  [\gamma^\al,\gamma^\be].
\end{align}

Now, the tetrad as given by \cite{OP65} is the ``square root'' of
$g^{\mu\nu}$ given in \eqref{eq:g_exp}. Its expansion is
\begin{align}
  e^\mu_\al = \delta^\mu_\al  + \sfrac\kappa2 h^\mu_\al +
\sfrac{\kappa^2}8\big(\bb{hh}\delta^\mu_\al - (hh)^\mu_\al\big) +
O(\kappa^3),
\end{align}
from which one obtains
\begin{align}\label{eq:spinconn}
  \omega_{\mu;\al\be}= \sfrac\kappa{2}\cdot \pa_{[\al}h_{\be]\mu} + O(\kappa^2).
\end{align}
The first-order term of $L_{g,\psi}$ has contributions from the kinetic terms involving
the derivatives of the Dirac field, and from
$\omega_{\mu;\al\be}$. The latter contribution vanishes
by \eqref{eq:spinconn} together with the
properties of the $\gamma$-matrices and the symmetry of
$h_{\beta\mu}$. The former contribution easily gives $L_{1,\psi} = \frac12\bb{h\Theta_\psi}$. $L_{2,\psi}$
has contributions from $\sqrt{-g}$, from the kinetic terms, and from
$\omega_{\mu;\al\be}$. The first of these is 
$-\frac14\bb{hh}\cdot L_{0,\psi}$. The kinetic
contribution is easily obtained as
$\frac{\kappa^2}8\big(\bb{hh}\bb{\Theta_\psi} -
  \bb{hh\Theta_\psi}\big)$.
Only
the last contribution, involving (by \eqref{eq:gamma_g}) the expansion of
$e^\mu_\nu\omega_{\mu;\al\be}$ to second order,
requires a lengthy computation, with the result
\begin{align*}  \sfrac i8\big[e^\mu_\nu\omega_{\mu;\al\be}\big]^{(2)}\cdot
  \ol\psi(\gamma^\nu\gamma^\al\gamma^\be+\gamma^\al\gamma^\be\gamma^\nu)\psi\modd
  \sfrac i{32}\cdot (h\pa_\nu
h)_{\be\al}\cdot \ol\psi
  (\gamma^\al\gamma^\be\gamma^\nu-\gamma^\nu\gamma^\be\gamma^\al)\psi .\end{align*}
Thus, to second order and modulo derivatives,
\begin{align}
  \label{eq:Dirac_exp}L_{g,\psi} \modd\, & L_{0,\psi} + \sfrac \kappa2
  \bb{h\Theta_\psi}+ \sfrac{\kappa^2}2\big[-\sfrac12\bb{hh}L_{0,\psi}
 \\ \notag &+\sfrac14\bb{hh}\bb{\Theta_\psi} -\sfrac14
  \bb{hh\Theta_\psi} + \sfrac i{16} (h\pa_\nu
h)_{\be\al}\ol\psi (\gamma^\al\gamma^\be\gamma^\nu-\gamma^\nu\gamma^\be\gamma^\al)\psi\big)\big].\end{align}
 Notice that $L_{0,\psi}=0$ when the free quantum Dirac field is
substituted for $\psi$, and the second
  line becomes $-\sfrac14 L_{2,\psi}^1+ \sfrac14 L_{2,\psi}^3 -
  \sfrac18 L_{2,\psi}^4$ in terms of $L_{2,\psi}^n$ as in \eqref{eq:L2n_fermi}.

In all three cases one reads off $L_{1,\rm mat}= \frac12
\doublcont{h\Theta_{\rm mat}}$, with $\Theta_{\rm mat}$ given as in
\eqref{eq:Theta_scalar}--\eqref{eq:Theta_fermi}, and $L_{2,\rm
mat}=[\dots]$ being the expression in the bracket. These expressions
coincide with \eqref{eq:L2_matter_ind} with $L_{2,r,\rm mat}=0$, when
$h_{\mu\nu}(x;c)$ is substituted for the classical field
$h_{\mu\nu}(x)$ and the free quantum matter field for the classical
matter field. This completes the proof of Theorem \ref{thm:reproduce_matter}.

\subsection*{Acknowledgements} C.G. was supported by the National
Science Center of Poland under the grant UMO-2019/35/B/ST1/01651.
J.M.G.B. thanks Enrique Alvarez for suggesting the general problem
treated here, José L. Fernández Barbón and Gernot Akemann for
hospitality and support respectively at IFT-Madrid and Uni-Bielefeld,
where embryonic drafts of this work were broached, and J. C. V\'arilly
for illuminating discussions. K.-H.R. thanks Brian Pitts for interesting
comments and for referring us to ref.\ \cite{OP65}. We all thank Bert
Schroer for comments.

\end{document}